\newtheorem{theorem}{Theorem}[section]
\newtheorem*{theorem*}{Theorem}
\newcommand\independent{\protect\mathpalette{\protect\independenT}{\perp}}
\def\independenT#1#2{\mathrel{\rlap{$#1#2$}\mkern2mu{#1#2}}}
\DeclareMathOperator{\logit}{logit}
\begin{document}
	
	\title{\textbf{\Large Analysing the causal effect of London cycle superhighways on traffic congestion}}
	
	\date{}
	\maketitle
	\author{
		\begin{center}
			\vskip -1cm
			
	\textbf{Prajamitra Bhuyan$^{\ast, \dag}$, Emma J. McCoy$^{\ast, \dag}$,  Haojie Li$^{\S}$, Daniel J. Graham$^{\ddagger,\dag}$}\\
	$^{\ast}$Department of Mathematics, Imperial College London, United Kingdom\\
	$^{\dag}$The Alan Turing Institute, United Kingdom\\
	$^{\S}$School of Transportation, Southeast University, China\\
	$^{\ddagger}$Department of Civil and Environmental Engineering, Imperial College London, United Kingdom
			
		\end{center}
	}
	
	\begin{abstract}
Transport operators have a range of intervention options available to improve or enhance their networks. Such interventions are often made in the absence of sound evidence on resulting outcomes. Cycling superhighways were promoted as a sustainable and healthy travel mode, one of the aims of which was to reduce traffic congestion. Estimating the impacts that cycle superhighways have on congestion is complicated due to the non-random assignment of such intervention over the transport network. In this paper, we analyse the causal effect of cycle superhighways utilising pre-intervention and post-intervention information on traffic and road characteristics along with socio-economic factors. We propose a modeling framework based on the propensity score and outcome regression model. The method is also extended to the doubly robust set-up. Simulation results show the superiority of the performance of the proposed method over existing competitors. The method is applied to analyse a real dataset on the London transport network. The methodology proposed can assist in effective decision making to improve network performance.
\end{abstract}
	
	{\bf Keywords :} Average treatment effect, Confounder, Difference-in-difference, Intelligent transportation system, Potential outcome.  \\
	
\section{Introduction}\label{intro}
The transport network acts as a lifeline for metropolitan cities across the world. Intelligent transportation systems can revolutionize  traffic management and results in significant improvements in people's mobility. They can offer an integrated approach to infrastructure development and traffic-mobility management. The absence of well functioning commuting channels can have a strongly negative impact on those residing in urban areas. In the last couple of decades, metropolitan areas in both developed and developing countries, have been affected by increasing traffic congestion and several other problems such as poor air quality from pollution. Most air pollution in cities can be attributed to road transport and domestic and commercial heating systems.  In addition to the negative impacts on mobility and air quality, previous studies indicate that severe congestion has a negative impact on GDP and an efficient transport system significantly improves the city's economic competitiveness \citep{Cost, Growth}.




Effective design and  management of the transport network has a significant impact on the quality of life in smart cities. In general, network interventions (i.e. treatment) are a widely used measure to control high-consequence events world-wide. But often such interventions are made in the absence of statistical evidence on the resulting outcomes.  Consequently, it is common to find situations in which interventions fail to deliver their intended consequences and in which transport networks perform poorly in relation to traffic flow, speed, capacity utilisation, safety, and economic and environmental impacts. Such interventions often have unintended negative consequences.  Due to the complex nature of transport networks it is difficult to disentangle drivers of good performance and identify the factors underpinning network failure. Furthermore, it is not easy to quantify how interventions impact on system performance because transport interventions are typically targeted to address specific network problems, and are therefore non-randomly assigned. The key consequence of this non-random treatment assignment is the possibility that the effect of the treatment is `confounded' if the treated and control units differ systematically with respect to several characteristics which may affect the outcome of interest. 

In recent years, cycling has been promoted as a healthy and sustainable mode of transport, with the additional benefits of reducing traffic congestion; frequency of road accidents; and emissions from vehicle exhausts. Schemes to promote cycling have been deemed effective and the number of cyclists has increased rapidly in major European cities, including Copenhagen, Amsterdam and London. Recent reports suggest a $160\%$ increase in daily journeys in Greater London over a period of ten years from 2004 to 2014 \citep{TFL}. The Mayor of London target to achieve 400\% increase in cycling by 2026 and several policy decisions to facilitate cycling including the Cycle Superhighways (CS), Santander Cycles and Biking Boroughs have already been implemented \citep{Mayor}. The CS are 1.5 meter wide barrier-free cycling-paths designed to connect outer London to central London. The blue surfacing on CS distinguishes them from the existing cycle-paths in London (See Figure \ref{CS}). The CS routes are designed to provide adequate spatial capacity for existing cyclists and potential future commuters who adopt cycling as a mode of transport. With the aim of enabling faster and safer cycle journeys, the twelve Cycle Superhighways, were announced in 2008. As displayed in Figure \ref{Route}, these routes were designed to radiate from the city center based on the clock face layout. In July 2010, the first two pilot routes, CS3 and CS7, were inaugurated. As reported by \citet{TFL11}, in the first year, cycling has increased by $83\%$ along CS3 and $46\%$ along CS7. A new East-West route was introduced to replace CS10, while CS6 and CS12 have been cancelled. As of the end of 2015, only four routes are in operation, namely CS2 (Stratford to Aldgate); CS3 (Barking to Tower Gateway); CS7 (Merton to the City); and CS8 (Wandsworth to Westminster). Due to the lack of adequate data, the effects of CS on traffic congestion are not evaluated in the report by \citet{TFL11}. Since their introduction, there has been considerable debate about the effects of CS on road traffic congestion \citep{Guardian, Evening}. The quantification of the effects of CS on traffic congestion is a complex problem due to the intricate nature of the transport-network, and various traffic and socioeconomic characteristics may act as confounders. For example, urban areas with high population density and  narrow roads are expected to have more congestion compared to the outskirts. Similarly, the number of public road transport stops will have an effect on pedestrian activities resulting in changes in traffic congestion.

\begin{figure}[htp]	
	\centering
	\includegraphics[scale=0.2]{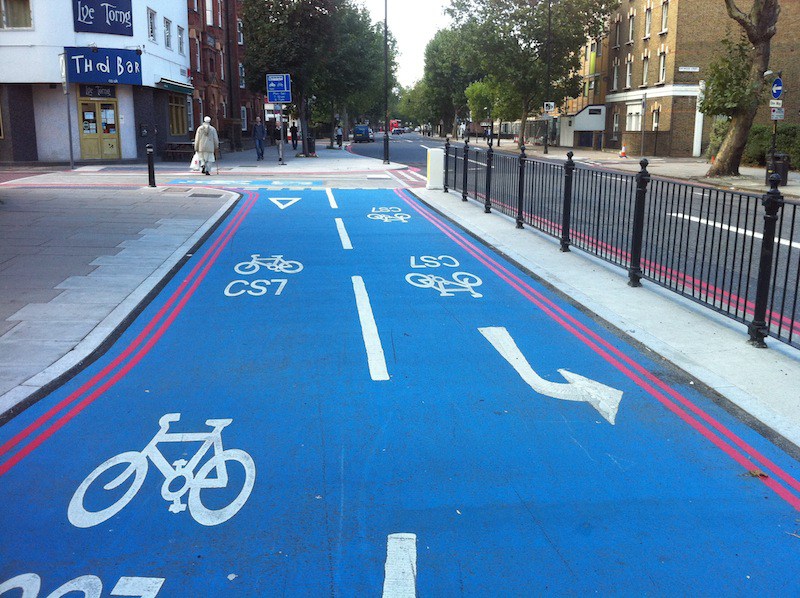}
	\includegraphics[scale=0.2]{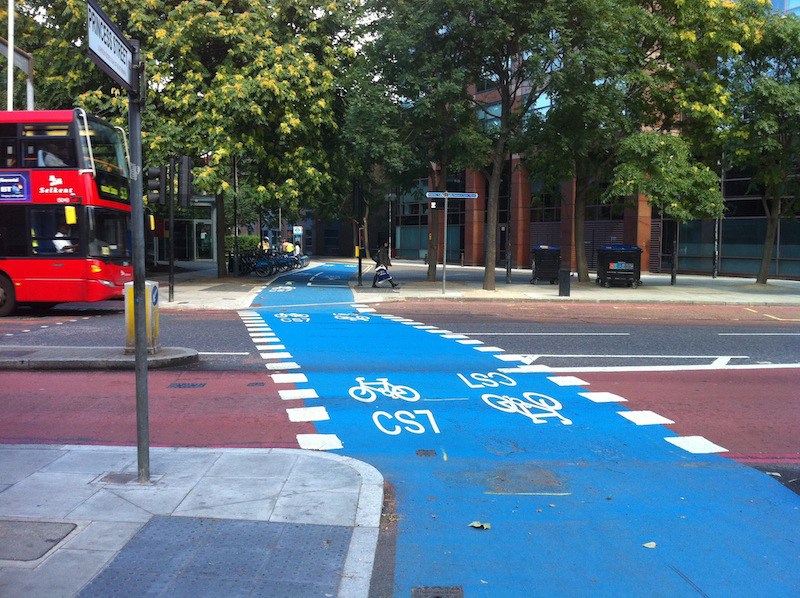}
	
	\includegraphics[scale=0.2]{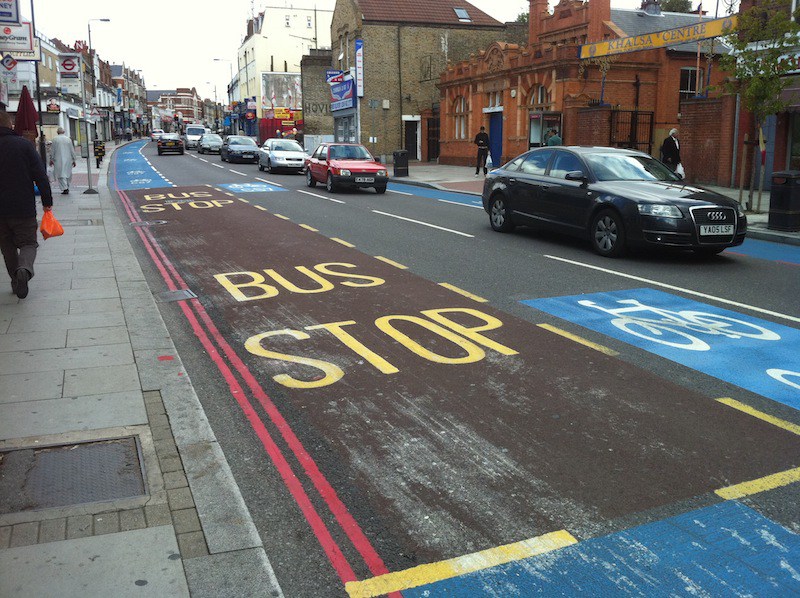}
	\includegraphics[scale=0.2]{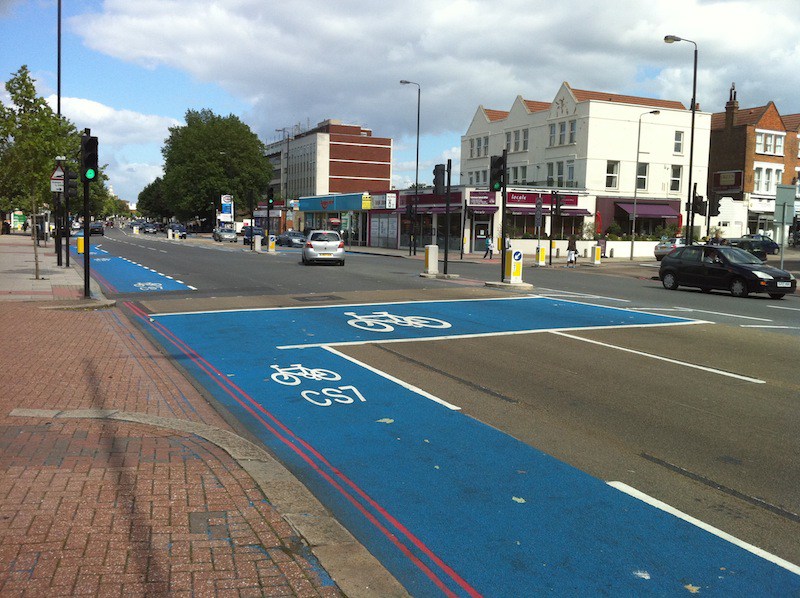}
	\caption{Cycling Superhighways}
	\label{CS}
\end{figure}

\begin{figure}[htp]	
	\centering
	\includegraphics[scale=0.4]{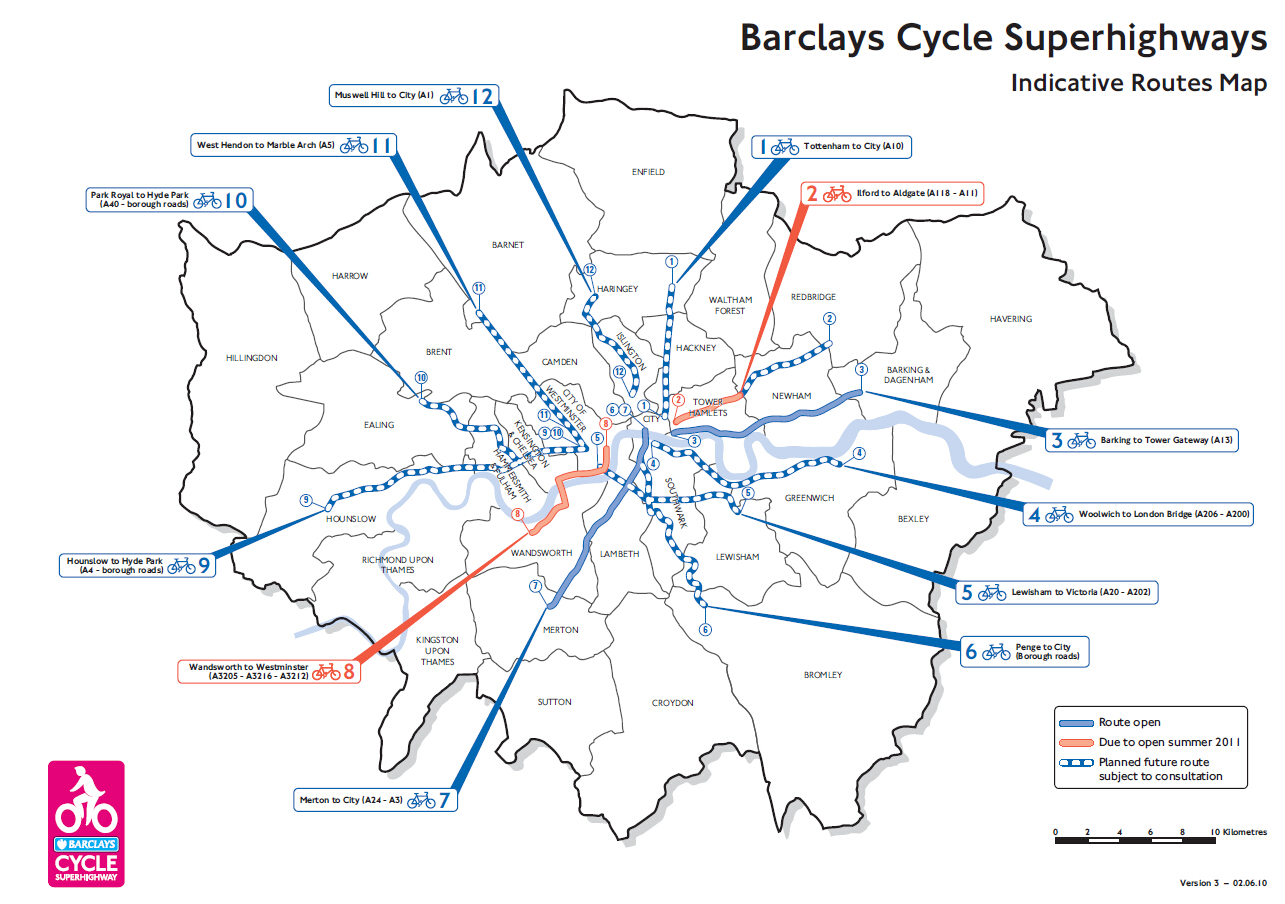}
	\caption{Route map of Cycling Superhighways}
	\label{Route}
\end{figure}

Congestion arises when the volume of traffic increases to a level that causes traffic speeds to fall below the free-flow (i.e. maximum) speed. There is a direct relationship between flow and traffic speed, both of which have an impact on the levels of congestion \citep{Link1, Link2}.
In order to investigate the impact of the introduction of Cycle Superhighways on congestion in London, we present a causal analysis of the links between traffic flow and speed and Cycle Superhighways. Our objective is to provide a modeling framework to obtain robust estimates of the causal quantities addressing issues of confounding and longitudinal dependence based on pre-intervention and post-intervention data. Using generalized linear mixed models (GLMMs), we incorporate time-invariant and time-varying covariates to adjust for potential sources of confounding and bias from longitudinal dependence by using random effect parameters. We also propose a novel estimator to deal with unknown interactions between time and covariates. The proposed methods are extended to a doubly robust set-up. First, we describes the existing models and methods used in the traditional potential outcome framework in Section \ref{frame}. In Section \ref{Prop}, we propose new methods motivated by real-life data and explain the advantages compared to the existing methods available in the literature. Simulation studies are performed for assessing the effectiveness of the proposed methods and the results are summarized in Section \ref{Sim}. In Section \ref{Data}, results from real data analysis on London Cycle Superhighways are discussed. We summarise the key findings and conclude with some discussion on future research in Section \ref{Conc}.

\section{Potential Outcome Framework}\label{frame}
In traditional causal inference problems, the primary interest is to estimate the average treatment effect (ATE) based on available data as realisations of a random vector, $Z_i = (Y_i,D_i,X_i)$, $i = 1,\ldots, n$, where $Y_i$ denotes a response, $D_i$ the exposure (or treatment status), and $X_i$ a vector of confounders or covariates for the ith unit. The treatment can be binary, multi-valued or continuous but essentially it is not assigned randomly. In this set-up, the simple comparisons of mean responses across different treatment groups lead to biased estimates and may not reveal a ‘causal’ effect of the treatment due to potential confounding. Confounding can be addressed if the vector of covariates $X_i$ is sufficient to ensure unconfoundedness, or conditional independence of potential responses and treatment assignment. In the context of binary treatments, the conditional independence assumption requires that $(Y_i(0), Y_i(1)) \independent I(D_i)|X_i$, where $I(D_i)$ is the indicator function for receiving the treatment and $Y (1)$ and $Y (0)$ indicate potential outcomes under treated or control status, respectively. An additional requirement for valid causal inference is that, conditional on covariates $X_i$, the probability of assignment to treatment is strictly positive for every combination of values of the covariates, i.e. $0<P\left[I(D_i)=1| X_{i}=x\right]<1$ for all $x$. See \citet{Bum} for more details. The main interest is to estimate the average treatment effect $\tau = \mathbb{E}[Y_i(1)]- \mathbb{E}[Y_i(0)]$, which measures the difference in expected outcomes under treatment and control status. In many applications, it is also of interest to understand what would be the effect (on average) of applying (vs. withholding) treatment among all units who are prescribed such interventions rather than average effect of applying, versus withholding, treatments among all units. This causal quantity is formally known as average treatment effect on treated (ATT) and defined as $\kappa=\mathbb{E}[Y_i(1)|D_i=1]- \mathbb{E}[Y_i(0)|D_i=1]$.
When there exists no heterogeneity of the treatment effect by covariates (i.e., no interactions effects of treatment and confounder), the ATE and ATT are identical if effects are additive. However, when treatment heterogeneity exists, the ATE and ATT differ \citep{Erica}.

Several estimators for the ATE are studied in the literature under the assumption that the covariate vector is sufficient to ensure the independence  of  potential  responses  and  treatment  assignment, see for example, \cite{HernanRobins}. Three estimators are of particular interest here. First, we model the expected response given the covariates and treatment, $\mathbb{E}[Y_i|X_i,D_i]$, using an outcome regression (OR) model $\Psi^{-1}\left\{m(X_i,D_i; \xi)\right\}$, for known link function $\Psi$, regression function $m(\cdot)$, and unknown parameter vector $\xi$. If the OR model is correctly specified, then a consistent estimate of the ATE is given by 
\begin{eqnarray}\label{OR}
\hat{\tau}_{OR}=\frac{1}{n}\sum_{i=1}^{n}\left[\Psi^{-1}\left\{ m(D_{i}=1,X_{i};\hat{\xi})\right\}- \Psi^{-1}\left\{m(D_{i}=0,X_{i};\hat{\xi})\right\} \right]\nonumber,
\end{eqnarray}
where $\hat{\xi}$ is iteratively reweighted least squares estimate \citep{Beng}. Similarly, the estimate of ATT is given by 
\begin{eqnarray}\label{OR_ATT}
\hat{\kappa}_{OR}=\frac{1}{\sum_{i=1}^{n}D_i}\sum_{i=1}^{n}D_{i}\left[\Psi^{-1}\left\{ m(D_{i}=1,X_{i};\hat{\xi})\right\}- \Psi^{-1}\left\{m(D_{i}=0,X_{i};\hat{\xi})\right\} \right]\nonumber.
\end{eqnarray} Second, we could assume a model for $f_{D|X}(d|x)$, the conditional density of the treatment given the covariates and use this model to estimate propensity scores (PS), which we denote $\pi(D_i|X_i; \hat{\alpha})$ with parameter vector $\alpha$.  A PS weighted estimator of the form attributed to \citet{HT} is then given by
\begin{eqnarray}\label{PS}
\hat{\tau}_{IPW}=\frac{1}{n}\sum_{i=1}^{n}\left[\frac{I(D_{i})Y_{i}}{\pi(D_{i}|X_{i}, \hat{\alpha})}-\frac{\left[1-I(D_{i})\right]Y_{i}}{1-\pi(D_{i}|X_{i}, \hat{\alpha})} \right],\nonumber
\end{eqnarray}
which can be used to estimate the ATE consistently \citep{Hirano_2003, Dav}. In the econometrics literature, there has been considerable attention devoted to the estimation of the ATT via matching. Matching is appealing and an intuitive approach, but matching estimators also have several unattractive statistical properties \citep{Abadie_2006}. The PS weighted estimator for ATT has been proposed previously by \citet{Hahn_1998,Hirano_2003}. \citet{Hahn_1998} considered nonparametric imputation methods to estimate the counterfactual outcomes under both treatment conditions and then marginalize these differences among the treated individuals using inverse probability of treatment weights. \citet{Hirano_2003} proposed an efficient estimate of ATT based on inverse PS weights which can be interpreted as an empirical likelihood estimator. \citet{Erica} provided a new and intuitive understanding of the usual weighted estimator from the perspective of importance sampling and proposed a consistent estimator
of the ATT, given by 
\begin{eqnarray}\label{PS_ATT}
\hat{\kappa}_{IPW}=\frac{1}{\sum_{i=1}^{n}D_i}\sum_{i=1}^{n}\left[I(D_{i})-\frac{\left[1-I(D_{i})\right]\pi(D_{i}|X_{i},\hat{\alpha})}{1-\pi(D_{i}|X_{i}, \hat{\alpha})}\right]Y_i.
\end{eqnarray}
Finally, we could combine an
OR and PS model and construct an estimate for the ATE and ATT. This is known as a doubly robust (DR) method in the sense that the resulting estimator is asymptotically consistent if either the OR or PS model is correctly specified. See \citet{DR} for more details.

It is important to note that the aforementioned estimators can only incorporate post-intervention observations on the response variable. In our context, pre-intervention measurements are also available, which potentially contain important information on the causal quantities of interest. One can still use the existing methods for estimation of ATEs and ATTs discarding the pre-intervention data, but the estimates are possibly less efficient compared to any estimator suitable for incorporating the entire dataset (See Section \ref{Sim} later). In the next section, we attempt to address this issue and propose new estimator for estimation of ATE and ATT based on both pre-intervention and post-intervention measurements.

\section{Proposed Methods}\label{Prop}
In our context, the response variable is observed both before and after the intervention and it depends on time-independent as well as time-dependent covariates.  The available data can be represented in a longitudinal structure as $Z_{it} = (Y_{it},D_{it},X_{it})$, $i = 1,\ldots, n$, $t=0,1$, where $t=0$ and $t=1$ denote pre-intervention and  post-intervention-period, respectively. Note that $D_{i0}=0$ for all $i=1,\ldots,n$, and  $X_{it}$ contains both time-independent and time-dependent covariates. The policy decisions are formulated based on pre-intervention condition, that is the treatment status $D_{i1}$ is only dependent on the $X_{i0}$.  

We consider the following assumptions for a valid inference on causal quantities. First, to address confounding we assume ``unconfoundedness'', or conditional independence between the response and treatment assignment given the covariates:
$(Y_{it}(0), Y_{it}(1)) \independent I(D_{i1})|(X_{i1}, X_{i0})$, where $Y_{it}(1)$ and $Y_{it}(0)$ are potential outcomes under treated and control status, respectively, at time $t$. Second, to ensure comparability across potential outcomes there must be common support, or overlap, by treatment status in the covariate distributions: $0<P\left[I(D_{i1})=1| X_{i0}=x\right]<1$ for all $x$. Third, for logical consistency there must be equivalence between the observed response under a given treatment and the potential response: $Y_{it}=I(D_{it})Y_{it}(1)+ \left[1-I(D_{it})\right]Y_{it}(0)$ for $t=0, 1$, $i=1,2,\ldots,n$. One important implication of this is that the stable unit treatment value assumption (SUTVA) (e.g., \citet{Rubin}) must hold, which requires: (i) that the outcome for each unit be independent of the treatment status of other units, or in other words, there should be no interference in treatment effects across units; and (ii) that there are no different versions of the treatment. Finally, we consider that the covariate $X_{it}$ is not affected by treatment status $D_{it}$ \citep{Daniel}.
In this set-up, our main interest is to estimate the causal effects $\tau_{0} = \mathbb{E}[Y_{i1}(1)]- \mathbb{E}[Y_{i1}(0)]$, and $\kappa_{0} = \mathbb{E}[Y_{i1}(1)|D_{i1}=1]- \mathbb{E}[Y_{i1}(0)|D_{i1}=1]$, which measures the difference in expected outcomes under treatment and control status in the post-intervention period for the entire population and treatment group, respectively.

In the literature of econometrics, a classical approach is to use the difference-in-difference (DID) method to adjust the effect of confounding based on pre-intervention and post-intervention data \citep{Lechner_2010}. This method is popularly used to estimate the causal effect of policy decisions in the field of health economics \citep{Health1, Health2}, epidemiology \citep{Epi}, market economics \citep{Market}, and various other allied fields. \citet{Lechner_2010} provided a detailed survey of the literature on the DID estimation strategy and discussed identification issues from the treatment effects perspective. In many economic applications, the identification conditions for ATE are unlikely to be plausible. Thus, empirical papers using DID mainly focused on the ATT and do not discuss estimation of the ATE. The DID method is easy to compute and intuitively appealing, however, it assumes the `parallel trend' assumption holds, that is, that the difference between the ‘treatment’ and ‘control’ group is constant over time in the absence of treatment.  Under the parallel trend assumption, the DID estimator is given by
\begin{eqnarray}\label{DID}
\hat{\kappa}_{DID}&=&\sum_{i=1}^{n}\left[\frac{I(D_{i1})Y_{i1}}{\sum_{i=1}^{n}I(D_{i1})}-\frac{\left[1-I(D_{i1})\right]Y_{i1}}{\sum_{i=1}^{n}(1-I(D_{i1}))} \right]\nonumber\\
&&-\sum_{i=1}^{n}\left[\frac{I(D_{i1})Y_{i0}}{\sum_{i=1}^{n}I(D_{i1})}-\frac{\left[1-I(D_{i1})\right]Y_{i0}}{\sum_{i=1}^{n}(1-I(D_{i1}))} \right].\nonumber
\end{eqnarray}
The parallel trend assumption may be implausible if pre-treatment characteristics that are thought to be associated with the response variable are unbalanced between the treated and the control group \citep{Parallel}. One popular method for reducing bias when outcome trends are not parallel is to first match treatment and control observations on pre-treatment outcomes before applying difference-in-differences on the matched sample. However, some studies suggest that this approach does not always eliminate or reduce bias \citep{Lindner_2018}.
To address the same issue, covariate adjusted DID estimators are proposed in the literature. \citet{DID} proposed a way to accommodate  covariates in the DID estimator based on a nonparametric conditional difference-in-differences extension of the method of matching. \citet{Parallel} exploited the same identification criteria considered by \citet{DID}, and proposed a flexible approach for the case in which differences in observed characteristics create non-parallel outcome dynamics between treated and controls. It also allows flexible effects of covariates on outcome variables to incorporate heterogeneous treatment effects. \citet{Callaway} extended DID methods for multiple time periods with variation in treatment timing and considered clustered bootstrapped standard errors to account for the serially correlated data. The possibility that the
treatments and potential outcomes of a group may be correlated over time is also considered by \citet{Chaisemartin_2020}. Recently, \citet{SantAnna_2020} proposed a doubly robust DID estimator which is consistent if either a propensity score or outcome regression model are correctly specified.

In the following subsections, we propose novel modeling approaches to address these key estimation issues. We provide a modeling framework to obtain estimates of the ATE and ATT addressing issues of confounding, longitudinal dependence, and model misspecification. The proposed methods flexibly incorporate time-invariant  and  time-varying  covariates and deal with unknown time-interaction and heterogeneous treatment effects. As mentioned before, the proposed methods rely on the `unconfoundedness' assumption motivated by the availability of various confounding factors and associated domain knowledge for the real application at hand. Unlike cross-sectional designs, it is not required for a difference-in-differences set-up to have similar baseline means in the outcome or other covariates for the treatment and control groups. As a result, a confounder in a difference-in-differences study is any variable related to both treatment assignment and the change in the outcome over time \citep{Daw_2018}. In our context, time-varying covariates are serially correlated and they are related to both outcome trend and its level. Similarly, interactions between time and other covariates act as a time-varying confounder and can be taken into consideration.
Also, a confounder is any time-invariant covariate related to both treatment assignment and the outcome level at the post-intervention period.

\subsection{A Generalized Mixed Model Approach}\label{GLMM}
In this section we develop an estimator of the ATE and ATT which can account for the situation where the response variable depends on both time-independent and time-dependent covariates as well as exposure and, additionally, where the pre-intervention and post-intervention responses of the same unit may be serially correlated. To estimate the ATE and ATT incorporating these features, we consider the following GLLM model:
\begin{eqnarray}\label{GLLM}
\Psi\left[\mathbb{E}\left(Y_{it}|D_{it},\boldsymbol{X_{it}}, \boldsymbol{Z_{it}}, \boldsymbol{u_{i}}\right)\right]= \gamma t + \beta D_{it} + \boldsymbol{X_{it}}^{T}\boldsymbol{\theta} + \boldsymbol{Z_{it}}^{T}\boldsymbol{u_{i}}, i=1,\ldots,n, t=0,1,
\end{eqnarray}
where $\Psi$ is some known link function, and $\gamma t + \beta D_{it}+\boldsymbol{X_{it}}^{T}\boldsymbol{\theta}$ is the fixed effects with time effect $\gamma$, treatment effect $\beta$, parameter vector $\boldsymbol{\theta}$ corresponding to the design vector $\boldsymbol{X_{it}}$, and $\boldsymbol{Z_{it}}$ is the design vector for the random effects $\boldsymbol{u_{i}}\sim N(\boldsymbol{0},\boldsymbol{G})$ with $\boldsymbol{G}$ being a positive definite matrix. Note that the design vector $\boldsymbol{X_{it}}$, and $\boldsymbol{Z_{it}}$ may also include the interactions of covariates with time $t$ and treatment $D_{it}$ to account for the heteroginity with respect to time and treatment, respectively. For the linear predictor $\eta_{it} =\gamma t + \beta D_{it} +\boldsymbol{X_{it}}^{T}\boldsymbol{\theta} + \boldsymbol{Z_{it}}^{T}\boldsymbol{u_{i}}$, the conditional expectation is $\mu_{it} =\mathbb{E}\left[Y_{it} |\eta_{it}\right]$ and the conditional variance is var$\left(Y_{it} |\eta_{it}\right)=\phi V(\mu_{it})$, where $V(\mu_{it})$ is the variance function and $\phi$ is a dispersion parameter. Note that the aforementioned model, given by (\ref{GLLM}), can be used to incorporate observations on multiple time-periods before and after the intervention.


For our case study, the relationship between the response and treatment is likely to be confounded in the sense that both the response (i.e. traffic flow or speed) and the treatment (i.e. cycle superhighways) could depend on a set of pre-intervention characteristics. Some of the characteristics may evolve over time that changes its distribution in the post-intervention period being serially correlated with the pre-intervention distribution. Moreover, several other factors in the post-intervention period could result in a  spurious association between response and treatment. To address these issues, we include both time-independent and time-dependent covariates within the design matrix $\boldsymbol{X_{it}}$. Also, the proposed model accounts for linear or non-linear time trend affecting the response. The response variable measured over different time points are serially correlated and the random effects $\boldsymbol{u_{i}}$ account for the same and other sources of unobserved heterogeneity \citep[Ch-7]{Diggle}. 
In real applications, normality of random effects is typically assumed for computational convenience, but it may be misspecified. However, theoretical and simulation studies indicate that most aspects of statistical inference, including estimation of covariate effects and estimation of the random effects variance, based on maximum likelihood methods are highly robust to
this assumption \citep{McCulloch_2011}.

The estimate of ATEs based on model (\ref{GLLM}) involves predictions for both the treatment statuses at the post-intervention period. For non-linear link functions, population-averaged expectations cannot be obtained by simply plugging the mean of the random effects in the expression for conditional expectation \citep{RE}. Using the double-expectation rule, the estimate of $\tau_{0}$ is obtained as
\begin{align*}
	\hat{\tau}_{GLMM}=\frac{1}{n}\sum_{i=1}^{n}\int_{-\infty}^{\infty}\left[\Psi^{-1}\left\{ \hat{\gamma}  + \hat{\beta}  + \boldsymbol{X_{i1}}^{T}\boldsymbol{\hat{\theta}} + \boldsymbol{Z_{i1}}^{T}\boldsymbol{u_{i}}\right\} \right.\\  \left. -\Psi^{-1}\left\{\hat{\gamma}  +  \boldsymbol{X_{i1}}^{T}\boldsymbol{\hat{\theta}} + \boldsymbol{Z_{i1}}^{T}\boldsymbol{u_{i}}\right\} \right] \boldsymbol{\varphi}(\boldsymbol{u_{i}}; \boldsymbol{\hat{G}})d\boldsymbol{u_{i}}\nonumber,
\end{align*}
where $\boldsymbol{\varphi}(\boldsymbol{u_{i}}; \boldsymbol{\hat{G}})$ is the density function of the random effects $\boldsymbol{u_{i}}$ with estimated covariance matrix $\boldsymbol{\hat{G}}$. The integral that is involved in the expression of $\hat{\tau}_{GLMM}$ must generally be evaluated numerically or by simulation. Similarly, the estimate of $\kappa_{0}$ is given by
\begin{align*}
	\hat{\kappa}_{GLMM}=\frac{1}{\sum_{i=1}^{n}D_{i1}}\sum_{i=1}^{n}\int_{-\infty}^{\infty}D_{i1}\left[\Psi^{-1}\left\{ \hat{\gamma}  + \hat{\beta}  + \boldsymbol{X_{i1}}^{T}\boldsymbol{\hat{\theta}} + \boldsymbol{Z_{i1}}^{T}\boldsymbol{u_{i}}\right\} \right.\\  \left. -\Psi^{-1}\left\{\hat{\gamma}  +  \boldsymbol{X_{i1}}^{T}\boldsymbol{\hat{\theta}} + \boldsymbol{Z_{i1}}^{T}\boldsymbol{u_{i}}\right\} \right] \boldsymbol{\varphi}(\boldsymbol{u_{i}}; \boldsymbol{\hat{G}})d\boldsymbol{u_{i}}\nonumber.
\end{align*}

\subsection{Inverse Propensity Weighted Difference-in-Difference}\label{ITWDID}
Historically, researchers in various fields of application have used regression based methods to measure the differences between the treated and control group. More recently, PS based methods have become increasingly popular to eliminate the effects of confounding present in observational data.  PS based models have several advantages, it is simpler to determine the adequacy of a PS model than to assess whether the regression model reasonably specifies the relationship between the exposure and covariates.  Moreover, standard goodness-of-fit tests fail to identify whether the fitted regression model has successfully accounted for the systematic differences between treated and control groups for the estimation of ATE \citep{PS}. In our context, the response variable not only depends on the pre-intervention confounders but is also affected by various factors, possibly unmeasured, in the post-intervention period. In particular, regression based method provide biased estimates in the presence of unknown interactions between time and covariates (See Section \ref{Sim} later). In order to avoid the difficulties that arise in regression based approaches, we propose the following difference-in-difference estimator for ATE based on an inverse propensity weighting:
\begin{eqnarray}\label{IPWDID}
\hat{\tau}_{IPWDID}&=&\frac{1}{n}\sum_{i=1}^{n}\left[\frac{I(D_{i1})Y_{i1}}{\pi(D_{i1}|\boldsymbol{X_{i0}}, \hat{\alpha})}-\frac{\left[1-I(D_{i1})\right]Y_{i1}}{1-\pi(D_{i1}|\boldsymbol{X_{i0}}, \hat{\alpha})} \right]\nonumber\\
&&-\frac{1}{n}\sum_{i=1}^{n}\left[\frac{I(D_{i1})Y_{i0}}{\pi(D_{i1}|\boldsymbol{X_{i0}}, \hat{\alpha})}-\frac{\left[1-I(D_{i1})\right]Y_{i0}}{1-\pi(D_{i1}|\boldsymbol{X_{i0}}, \hat{\alpha})} \right],
\end{eqnarray}
where $\pi(D_{i1}|\boldsymbol{X_{i0}}, \hat{\alpha})$ is the estimated PS based on pre-intervention covariate vector $\boldsymbol{X_{i0}}$. Logit and probit models are widely used and perform reasonably well to estimate the PS \citep{PS}, however, one can use the generalized additive model (GAM), or machine learning techniques such as random forests, neural network, etc., to avoid the bias induced by a misspecified parametric model \citep{Westreich_2010, Lee_2010}. Here also, the proposed estimate $\hat{\tau}_{IPWDID}$ can also be generalized to incorporate observations on multiple time-periods. For this purpose, one may consider different PS models based on multiple pre-intervention observations, as discussed in \citet{Lindner_2018}, and replacing $Y_{i0}$ and $Y_{i1}$ in (\ref{IPWDID}) by its averages for the pre-intervention and post-intervention periods, respectively. The inverse propensity score difference-in-difference estimate $\hat{\tau}_{IPWDID}$ is consistent under the condition that the PS model $\pi(D_{i1}|\boldsymbol{X_{i0}}, \hat{\alpha})$ is correctly specified, and the expected potential outcome under treatment and control status are equal in the pre-intervention period (i.e. $\mathbb{E}[Y_{i0}(1)]=\mathbb{E}[Y_{i0}(0)]$). The proof is outlined in the Appendix. 

To estimate the ATT, we consider the estimate proposed by \citet{Erica}, given by (\ref{PS_ATT}), and modify the same to accommodate pre-intervention and post-intervention measurements on the response variable. The proposed difference-in-difference estimator based on an inverse propensity weighting is given by
\begin{eqnarray}\label{IPWDID_ATT}
\hat{\kappa}_{IPWDID}&=&\frac{1}{\sum_{i=1}^{n}D_{i1}}\sum_{i=1}^{n}\left[I(D_{i1})-\frac{\left[1-I(D_{i1})\right]\pi(D_{i1}|X_{i},\hat{\alpha})}{1-\pi(D_{i1}|X_{i}, \hat{\alpha})}\right]Y_{i1}\nonumber\\
&&-\frac{1}{\sum_{i=1}^{n}D_{i1}}\sum_{i=1}^{n}\left[I(D_{i1})-\frac{\left[1-I(D_{i1})\right]\pi(D_{i1}|X_{i},\hat{\alpha})}{1-\pi(D_{i1}|X_{i}, \hat{\alpha})}\right]Y_{i0}\nonumber.
\end{eqnarray}
A similar estimator is also proposed by \citet{Parallel}. Under the same assumptions, as considered for the estimation of ATE, $\hat{\kappa}_{IPWDID}$ is consistent for the ATT.

\subsection{Doubly Robust Method}\label{DR}
We have proposed two different methods that adjust the parameter estimates in the case where covariates may be related both to the response and treatment assignment mechanism. In Subsection \ref{GLMM}, we model the relationships between the covariates and the response and use those relationships to predict for both the treatment statuses and obtain estimate of ATE and ATT. Another approach, discussed in Subsection \ref{ITWDID}, is to model the probabilities of treatment assignment given the covariates and incorporate them into a weighted difference-in-difference estimate. However, with observational data, one can never be sure that a model for the treatment assignment mechanism or an outcome regression model is correct, an alternative approach is to develop a  doubly-robust (DR) estimator. Several DR estimation methodologies are proposed in the literature \citep{DR}. In this paper, we propose a DR approach that is close in spirit to the approach considered by \citet{SantAnna_2020} but the modeling technique and associated inference are different. In particular, we extend the model given by (\ref{GLLM}) through the augmented regression method as
\begin{eqnarray}\label{GLMMDR}
\Psi\left[\mathbb{E}\left(Y_{it}|D_{it},\boldsymbol{X_{it}}, \boldsymbol{Z_{it}}, \boldsymbol{u_{i}}\right)\right]= \gamma t + \beta D_{it} + \boldsymbol{X_{it}}^{T}\boldsymbol{\theta} +  \boldsymbol{Z_{it}}^{T}\boldsymbol{u_{i}} + \zeta h(\pi(D_{i1}|\boldsymbol{X_{i0}}, \hat{\alpha})),
\end{eqnarray}
where $h(\cdot)$ is a suitably chosen parametric function and $\zeta$ is the associated coefficient. \citet{Rot} considered inverse PS as an additional covariate, i.e. $h(z)=1/z$, and \citet{Beng} proposed the so called `clever covariate' which is also a function of the inverse PS and treatment status. \citet{DR} compared the performance of DR estimates with various choices of $h(\cdot)$ and found that these two choices perform very poorly when some of the estimated propensities are small. The best performance is achieved by choosing $h(\cdot)$ as a step-wise constant function with discontinuities at the sample quantiles of $\pi(D_{i1}|\boldsymbol{X_{i0}}, \hat{\alpha})$. In other words, one can simply coarse classify the PS into some suitable number of categories and create dummy indicators to augment the regression model. With this choice, the model (\ref{GLMMDR}) can be rewritten as  
\begin{eqnarray}\label{GLMMDRR}
\Psi\left[\mathbb{E}\left(Y_{it}|D_{it},\boldsymbol{X_{it}}, \boldsymbol{Z_{it}}, \boldsymbol{u_{i}}\right)\right]= \gamma t + \beta D_{it} + \boldsymbol{X_{it}}^{T}\boldsymbol{\theta} +  \boldsymbol{Z_{it}}^{T}\boldsymbol{u_{i}} + \boldsymbol{W_i}^{T}\boldsymbol{\zeta} ,
\end{eqnarray}
where $\boldsymbol{W_{i}}$ is the vector of dummy variables to indicate the category based on the coarse classification of $\pi(D_{i1}|\boldsymbol{X_{i0}}, \hat{\alpha})$ and $\boldsymbol\zeta$ is the associated parameter vector. Here also, the augmented regression model, given by (\ref{GLMMDRR}), can be used to incorporate observations on multiple time-periods. As discussed before, one may consider different PS models based on multiple pre-intervention observations, as discussed in \citet{Lindner_2018}, to obtain the augmented covariate. The DR estimate of $\tau_{0}$ based on model (\ref{GLMMDRR}) is given by
\begin{align*}
	\hat{\tau}_{DRGLMM}=\frac{1}{n}\sum_{i=1}^{n}\int_{-\infty}^{\infty}\left[\Psi^{-1}\left\{ \hat{\gamma}  + \hat{\beta}  + \boldsymbol{X_{i1}}^{T}\boldsymbol{\hat{\theta}} + \boldsymbol{Z_{i1}}^{T}\boldsymbol{u_{i}}+\boldsymbol{W_i}^{T}\boldsymbol{\hat{\zeta}}\right\}\right.\\
	\left. - \Psi^{-1}\left\{\hat{\gamma}  +  \boldsymbol{X_{i1}}^{T}\boldsymbol{\hat{\theta}} + \boldsymbol{Z_{i1}}^{T}\boldsymbol{u_{i}}+\boldsymbol{W_i}^{T}\boldsymbol{\hat{\zeta}}\right\} \right] \boldsymbol{\varphi}(\boldsymbol{u_{i}}; \boldsymbol{\hat{G}})d\boldsymbol{u_{i}}\nonumber.
\end{align*}
Similarly, the DR estimate of $\kappa_{0}$ based on model (\ref{GLMMDRR}) is given by
\begin{align*}
	\hat{\kappa}_{DRGLMM}=\frac{1}{\sum_{i=1}^{n}D_{i1}}\sum_{i=1}^{n}\int_{-\infty}^{\infty}D_{i1}\left[\Psi^{-1}\left\{ \hat{\gamma}  + \hat{\beta}  + \boldsymbol{X_{i1}}^{T}\boldsymbol{\hat{\theta}} + \boldsymbol{Z_{i1}}^{T}\boldsymbol{u_{i}}+\boldsymbol{W_i}^{T}\boldsymbol{\hat{\zeta}}\right\}\right.\\
	\left. - \Psi^{-1}\left\{\hat{\gamma}  +  \boldsymbol{X_{i1}}^{T}\boldsymbol{\hat{\theta}} + \boldsymbol{Z_{i1}}^{T}\boldsymbol{u_{i}}+\boldsymbol{W_i}^{T}\boldsymbol{\hat{\zeta}}\right\} \right] \boldsymbol{\varphi}(\boldsymbol{u_{i}}; \boldsymbol{\hat{G}})d\boldsymbol{u_{i}}\nonumber.
\end{align*}
The performance of $\hat{\tau}_{DRGLLM}$ and $\hat{\kappa}_{DRGLMM}$ may depend on the dimension of $\boldsymbol{\zeta}$. In practice, it is observed that the performance of the DR estimate is satisfactory with no more that four dummy variables \citep{DR}. The estimate of $\boldsymbol{\zeta}$ converges to $\boldsymbol{0}$ if the GLMM is correctly specified. If the PS model is correct, but the GLMM is not, the augmented regression has a bias correction property. As expected, the DR estimate does not necessarily translate into good performance when neither model is correctly specified. See \citet{Rot} for details. Also, it is important to note that the DR estimate may produce biased results in the presence of unknown interactions between time and covariates. 

\section{Simulation Study}\label{Sim}
In order to compare the performance of the proposed methods and existing competitors, we generate data comprising $250$ and $500$ units. We generate three independent covariates $X_{1it}$,  $X_{2i}$ and $V_{i}$, where $(X_{1i0},X_{1i1})'$ follows a bivariate normal with mean vector $(15,20)'$ and covariance matrix 
$\bigl(\begin{smallmatrix}
6&5.5 \\ 5.5&6
\end{smallmatrix} \bigr)$;
$X_{2i}$ follows an exponential distribution with mean $2$; and $V_{i}$ follows a normal distribution with mean $1$ and variance $1$. The covariate $X_{1it}$ varies with time while $X_{2i}$ and $V_{i}$ are time-invariant. We first consider homogeneous treatment effect and specify the following relationships between the covariates and the treatment $D_{it}$ and response $Y_{it}$:
\begin{eqnarray}\label{Model-I}
Y_{it}&=& 10 + 3t + 15 D_{it} + X_{1it} + 2X_{2i} + \log(X_{2i})+  u_{i} + \epsilon_{it},  \nonumber\\
\logit\left[ \pi(D_{i1}|X_{1i0},X_{2i},V_{i}) \right]&=& -3 + 0.2X_{1i0} + 0.1 X_{2i} + 0.3V_{i},\nonumber
\end{eqnarray}
where random effect $u_{i}$ and error $\epsilon_{ti}$ are generated from independent normal distributions with mean $0$ and variance $30$, and $20$, respectively. Thus, $X_{1it}$ and $X_{2i}$ are confounders and $V_{i}$ is a non-confounding covarite. 
Under this set-up, the following estimators are tested:
\sloppy
\begin{itemize}
	\item[1.] $\hat{\tau}_{OR}$ - correctly specified OR model based on post-intervention measurements: $\mathbb{E}\left[Y_{i1}|D_{i1},X_{1i1},X_{2i} \right]= \theta_{0}  + \beta D_{i1} + \theta_{1}X_{1i1} + \theta_{2}X_{2i} + \theta_{3}\log(X_{2i})$.
	\item[2.] $\tilde{\tau}_{OR}$ - incorrectly specified OR model based on post-intervention measurements with erroneous exclusion of the time-invariant confounder $X_{2i}$: $\mathbb{E}\left[Y_{i1}|D_{i1},X_{1i1},X_{2i} \right]= \theta_{0}  + \beta D_{i1} + \theta_{1}X_{1i1}$.
	\item[3.] $\hat{\tau}_{GLMM}$ - correctly specified GLMM model: $\mathbb{E}\left[Y_{it}|D_{i1},X_{1i1},X_{2i}, u_{i} \right]= \theta_{0} +\gamma t + \beta D_{it} + \theta_{1}X_{1it} + \theta_{2}X_{2i}+ \theta_{3}log(X_{2i})+u_{i}$.
	\item[4.] $\tilde{\tau}_{GLMM}$ - incorrectly specified GLMM model with erroneous
	exclusion of the time-invariant confounder $X_{2i}$: $\mathbb{E}\left[Y_{it}|D_{i1},X_{1i1}, u_{i} \right]= \theta_{0} +\gamma t + \beta D_{it} + \theta_{1}X_{1it} + u_{i}$.
	\item[5.] $\hat{\tau}_{IPW}$ - inverse propensity weighted estimate based on post-intervention response (i.e. $Y_{i}=Y_{i1}$) with correctly specified PS model: $\logit\left[ \pi(D_{i1}|X_{1i0},X_{2i},V_{i}) \right]= \alpha_{0} + \alpha_{1}X_{1i0} + \alpha_{2}X_{2i} + \alpha_{3}V_{i}$.
	\item[6.] $\tilde{\tau}_{IPW}$ - inverse propensity weighted estimate based on post-intervention response with incorrectly specified PS model with erroneous exclusion of the time-invariant confounder $X_{2i}$: $\logit\left[ \pi(D_{i1}|X_{1i0},X_{2i},V_{i}) \right]= \alpha_{0} + \alpha_{1}X_{1i0}  + \alpha_{3}V_{i}$.
	\item[7.] $\hat{\tau}_{IPWDID}$ - inverse propensity weighted difference and difference estimate based on correctly specified PS model: $\logit\left[ \pi(D_{i1}|X_{1i0},X_{2i},V_{i}) \right]= \alpha_{0} + \alpha_{1}X_{1i0} + \alpha_{2}X_{2i} + \alpha_{3}V_{i}$.
	\item[8.] $\hat{\tau}_{IPWDID}$ - inverse propensity weighted difference and difference estimate with incorrectly specified PS model with erroneous exclusion of the time-invariant confounder $X_{2i}$: $\logit\left[ \pi(D_{i1}|X_{1i0},X_{2i},V_{i}) \right]= \alpha_{0} + \alpha_{1}X_{1i0}  + \alpha_{3}V_{i}$.
	\item[9.] $\hat{\hat{\tau}}_{DRGLMM}$ - DR estimate based on correctly specified GLMM model and correctly specified PS model.
	\item[10.] $\hat{\tau}_{DRGLMM}$ - DR estimate based on correctly specified GLMM model but augmented with incorrectly specified PS model with erroneous exclusion of the time-invariant confounder $X_{2i}$.
	\item[11.] $\tilde{\tau}_{DRGLMM}$ - DR estimate based on incorrectly specified GLMM with erroneous exclusion of the time-invariant confounder $X_{2i}$ but augmented with correctly specified PS covariates.
	\item[12.] $\hat{\tilde{\tau}}_{DRGLMM}$ - DR estimate based on incorrectly specified GLMM and augmented with incorrectly specified PS covariates where time-invariant confounder $X_{2i}$ is excluded in both the models.
\end{itemize}
The bias, variance (Var) and mean squared error (MSE) of different competitive estimators of the ATE are presented in Table \ref{Tab_1} based on 10000 replications. It is clearly seen that $\tilde{\tau}_{GLMM}$, $\tilde{\tau}_{OR}$ and $\tilde{\tau}_{IPW}$ are biased as the effect of the treatment is counfounded due to the erroneous omission of the time-invariant confounder $X_{2}$. As expected, the performance of $\hat{\tau}_{GLMM}$ is the best with respect to MSE. The DR estimates $\hat{\hat{\tau}}_{DRGLMM}$, $\hat{\tau}_{DRGLMM}$ and $\tilde{\tau}_{DRGLMM}$ perform as well as $\hat{\tau}_{GLMM}$, but the performance of $\hat{\tilde{\tau}}_{DRGLMM}$ is similar to that of $\tilde{\tau}_{GLMM}$ with respect to both bias and variance. 
The bias and variance of $\hat{\tau}_{OR}$ are significantly larger than those of $\hat{\tau}_{GLMM}$ due to the omission of the pre-intervention data. It is also observed that the variance and MSE of $\hat{\tau}_{IPWDID}$ is much less than those of $\hat{\tau}_{IPW}$ for the same reason. This clearly demonstrates that the pre-intervention data contains important information and provides more precise estimates of the ATE under correct model specifications. 

To analyse the sensitivity of the estimators associated with the interaction between time and covariates, we now consider the following relationships between the covariates and the response $Y_{it}$:
\begin{eqnarray}\label{Model-II}
Y_{it}&=& 10 + 3t + 15 D_{it} + X_{1it} + 2tX_{2i} + \log(X_{2i}) + u_{i} + \epsilon_{it}.  \nonumber
\end{eqnarray}
where we have considered an interaction effect of time and time-invariant confounder $X_{2}$ keeping all other specifications as in the previous case. Accordingly, we redefine the aforementioned list of estimators used for comparison.
The results are presented in Table \ref{Tab_2}. With the exception for the case of $\tilde{\tau}_{DRGLMM}$, the performance of all the estimators are similar. Note that, the performance of $\tilde{\tau}_{DRGLMM}$ is better than both $\tilde{\tau}_{GLLM}$ and $\tilde{\tau}_{IPWDID}$, as expected, but it is not as good as $\hat{\tau}_{GLLM}$. Interestingly, $\hat{\tau}_{IPWDID}$ performs better than DR method when the PS model is correctly specified and the GLLM is misspecified due to the erroneous omission of the interaction effect of time with the time-invariant confounder $X_2$.

Next we consider heterogeneous treatment effect using the following relationships between the covariates and the treatment $D_{it}$ and response $Y_{it}$:
\begin{eqnarray}\label{Model-I}
Y_{it}&=& 10 + 2t + 15 D_{it} + X_{1it} + 3X_{2i} + D_{it}X_{1it}+  u_{i} + \epsilon_{it}.  \nonumber
\end{eqnarray}
The bias, variance (Var) and mean squared error (MSE) of different competitive estimators of the ATE and ATT are presented in Tables \ref{Tab_3} and \ref{Tab_4}, respectively. The performance of the estimators for the ATE is similar to the results for homogeneous treatment effects as presented in \ref{Tab_1}. As expected, the erroneous omission of the time-invariant confounder $X_{2}$, results in high bias for $\tilde{\kappa}_{GLMM}$, $\tilde{\kappa}_{OR}$ and $\tilde{\kappa}_{IPW}$. The performance of $\hat{\kappa}_{GLMM}$ is the best and the DR estimates $\hat{\hat{\kappa}}_{DRGLMM}$, $\hat{\kappa}_{DRGLMM}$ and $\tilde{\kappa}_{DRGLMM}$ perform as well as $\hat{\kappa}_{GLMM}$. As expected, the performance of $\hat{\tilde{\kappa}}_{DRGLMM}$ is similar to that of $\tilde{\kappa}_{GLMM}$ with respect to both bias and variance. 
Here also, the bias and variance of $\hat{\kappa}_{OR}$ and $\hat{\kappa}_{IPW}$ are significantly larger than those of $\hat{\kappa}_{GLMM}$ and $\hat{\kappa}_{IPWDID}$, respectively, due to the omission of data. It is also observed that the performance of $\hat{\tau}_{OR}$ and $\hat{\kappa}_{OR}$ further deteriorates (not reported here) with the pre-intervention component $X_{1i0}$ instead of the post-intervention component $X_{1i1}$ in the conditioning set when the distribution of $X_{1it}$ changes with time $t$.

Now, we consider the interaction between time and covariates in presence of heterogeneous treatment effect using the following relationships between the covariates and the response $Y_{it}$:
\begin{eqnarray}\label{Model-I}
Y_{it}&=& 10 + 2t + 15 D_{it} + X_{1it} + 3tX_{2i} + D_{it}X_{1it}+  u_{i} + \epsilon_{it},  \nonumber
\end{eqnarray}
and present the simulation results in Tables \ref{Tab_5}-\ref{Tab_6}. The performance of all the estimators, except $\tilde{\tau}_{DRGLMM}$ and $\tilde{\kappa}_{DRGLMM}$, are similar to the case without time-interaction. Here also, the performances of $\tilde{\tau}_{DRGLMM}$ and $\tilde{\kappa}_{DRGLMM}$ are better than all the estimates based on misspecified models. Similar to the case of homogeneous treatment effects, $\hat{\tau}_{IPWDID}$ performs better than DR method when the PS model is correctly specified and the GLLM is misspecified due to the erroneous omission of the interaction effect of time with the time-invariant confounder $X_2$. A similar pattern is also observed for $\hat{\kappa}_{IPWDID}$. 

To understand the utility of the random effects $u_i$, we have also evaluated the performance of the estimates, ignoring them in model fitting under each of the aforementioned simulation settings (not reported here). As expected, the variance of the estimates increases without accounting for the between-cluster heterogeneity, and that results in higher MSE. It is important to note that the MSE associated with the misspecified models, including DR models, increases drastically in the absence of random effects. 

\begin{table}[!ht]
	\caption{Simulation results for estimation of ATE with homogeneous treatment effects.}
	\centering
	\begin{tabular}{c c c c c c c c c c c c}
		\hline\hline
		&  & & & $n=250$ &  &  \vline  &  &  $n=500$ &   \\ 
		\hline
		Estimator &  OR/ GLLM & PS &  Bias $\times100$ & Var &  MSE & \vline  & Bias $\times100$ & Var &  MSE \\ 
		\hline
		$\hat{\tau}_{OR}$ & Correct & -  & 1.309 & 0.925 & 0.925 & \vline  & 0.425 &  0.452 &  0.452  \\
		$\tilde{\tau}_{OR}$ & Incorrect & -  &  92.999 & 1.335 & 2.200 & \vline  & 93.232 & 0.659 & 1.529  \\
		$\hat{\tau}_{GLMM}$ & Correct & -  & 0.301 & 0.549 & 0.549 & \vline  &  0.436 & 0.274 & 0.274   \\
		$\tilde{\tau}_{GLMM}$ & Incorrect & -   & 23.324 & 0.586 & 0.640 & \vline  & 23.821 & 0.294 & 0.351   \\
		$\hat{\tau}_{IPW}$ & - & Correct  & 7.228 & 2.411 & 2.416 & \vline  &  1.915 & 1.031 & 1.032   \\
		$\tilde{\tau}_{IPW}$ & - & Incorrect  & 96.495 & 2.154 & 3.085 & \vline  & 95.553 & 0.968 & 1.881   \\
		$\hat{\tau}_{IPWDID}$ & - & Correct & 0.754 & 0.830 &  0.830 & \vline  &  0.612 & 0.414 & 0.414   \\
		$\tilde{\tau}_{IPWDID}$ & - & Incorrect & 0.760 & 0.872 &  0.872 & \vline  & 0.711 & 0.430 & 0.430  \\
		$\hat{\hat{\tau}}_{DRGLMM}$ & Correct & Correct  & 0.209 & 0.556 & 0.556 & \vline  & 0.174 & 0.277 & 0.277   \\
		$\hat{\tau}_{DRGLMM}$ & Correct & Incorrect & 0.219 & 0.555 & 0.555 & \vline  & 0.454 & 0.277 & 0.277   \\
		$\tilde{\tau}_{DRGLMM}$ & Incorrect & Correct& 2.0409 & 0.573 & 0.573 & \vline  & 2.034 & 0.287 & 0.287   \\
		$\hat{\tilde{\tau}}_{DRGLMM}$ & Incorrect & Incorrect  & 22.929 & 0.592 & 0.644 & \vline  & 23.776 & 0.297 & 0.353   \\
		\hline
	\end{tabular}
	\begin{tablenotes}
		\small
		\item OR: Outcome regression estimate; GLLM: Generalized liner mixed model estimate;
		IPW: Inverse propensity weighted estimate;
		IPWDID: Inverse propensity weighted DID estimate; DRGLLM: Doubly robust GLLM estimate.
	\end{tablenotes}
	\label{Tab_1}
\end{table}

\begin{table}[!ht]
	\centering
	\caption{Simulation results for estimation of ATE with homogeneous treatment effects and time-interactions.}
	\centering
	\begin{tabular}{c c c c c c c c c c c c}
		\hline\hline
		&  & & & $n=250$ &  &  \vline  &  &  $n=500$ &   \\ 
		\hline
		Estimator &  OR/ GLLM & PS &  Bias $\times100$ & Var &  MSE & \vline  & Bias $\times100$ & Var &  MSE \\ 
		\hline
		$\hat{\tau}_{OR}$ & Correct & -  & 1.334 & 0.923 & 0.923 & \vline  & 0.425 &  0.452 &  0.452  \\
		$\tilde{\tau}_{OR}$ & Incorrect & -  &  93.031 & 1.333 & 2.199 & \vline  & 93.232 & 0.659 & 1.529  \\
		$\hat{\tau}_{GLMM}$ & Correct & -  & 0.357 &  0.553 &  0.553 & \vline  &   0.499 & 0.276 & 0.276   \\
		$\tilde{\tau}_{GLMM}$ & Incorrect & -   & 78.354 &  0.856 & 1.470 & \vline  & 79.100 & 0.424 & 1.050   \\
		$\hat{\tau}_{IPW}$ & - & Correct  & 7.081 & 1.77 & 1.77 & \vline  &  2.030 & 0.809 & 0.809   \\
		$\tilde{\tau}_{IPW}$ & - & Incorrect  & 94.713 & 1.750 & 2.647 & \vline  &  93.797 & 0.828 & 1.708   \\
		$\hat{\tau}_{IPWDID}$ & - & Correct &  3.756 & 0.924 &  0.924 & \vline  & 1.274 & 0.457 & 0.457   \\
		$\tilde{\tau}_{IPWDID}$ & - & Incorrect & 74.243 & 1.062 &  1.613 & \vline  & 74.214 & 0.523 & 1.074  \\
		$\hat{\hat{\tau}}_{DRGLMM}$ & Correct & Correct  &  -0.063 & 0.560 & 0.560 & \vline  &  0.106 & 0.278 & 0.278   \\
		$\hat{\tau}_{DRGLMM}$ & Correct & Incorrect & 0.381 & 0.558 & 0.558 & \vline  &  0.485 & 0.277 & 0.277   \\
		$\tilde{\tau}_{DRGLMM}$ & Incorrect & Correct& 57.961 & 0.745 & 1.081 & \vline  & 58.109 & 0.365 & 0.703   \\
		$\hat{\tilde{\tau}}_{DRGLMM}$ & Incorrect & Incorrect  & 78.461 & 0.861 & 1.477 & \vline  & 79.258 & 0.425 & 1.053   \\
		\hline
	\end{tabular}
	\begin{tablenotes}
		\small
		\item OR: Outcome regression estimate; GLLM: Generalized liner mixed model estimate;
		IPW: Inverse propensity weighted estimate;
		IPWDID: Inverse propensity weighted DID estimate; DRGLLM: Doubly robust GLLM estimate.
	\end{tablenotes}
	\label{Tab_2}
\end{table}

\begin{table}[!ht]
	\caption{Simulation results for estimation of ATE with heterogeneous treatment effects.}
	\centering
	\begin{tabular}{c c c c c c c c c c c c}
		\hline\hline
		&  & & & $n=250$ &  &  \vline  &  &  $n=500$ &   \\
		\hline
		Estimator &  OR/ GLLM & PS &  Bias $\times100$ & Var &  MSE & \vline  & Bias $\times100$ & Var &  MSE \\ 
		\hline
		$\hat{\tau}_{OR}$ & Correct & -  & 0.014 & 0.928 & 0.928 & \vline  & 0.008 &  0.454 &  0.454  \\
		$\tilde{\tau}_{OR}$ & Incorrect & -  &  3.120 & 1.498 & 2.690 & \vline  & 3.140 & 0.741 &  1.949  \\
		$\hat{\tau}_{GLMM}$ & Correct & -  & 0.004 & 0.575 & 0.575 & \vline  &  0.013 & 0.285 & 0.285   \\
		$\tilde{\tau}_{GLMM}$ & Incorrect & -   & 0.707 & 0.622 & 0.682 & \vline  & 0.724 & 0.312 & 0.377   \\
		$\hat{\tau}_{IPW}$ & - & Correct  & 0.248 & 2.824 & 2.831 & \vline  &  0.071 & 1.204 & 1.204   \\
		$\tilde{\tau}_{IPW}$ & - & Incorrect  & 3.276 & 2.520 & 3.834 & \vline  & 3.252 & 1.124 & 2.419   \\
		$\hat{\tau}_{IPWDID}$ & - & Correct & 0.036 & 0.958 &  0.958 & \vline  &  0.029 & 0.465 & 0.465   \\
		$\tilde{\tau}_{IPWDID}$ & - & Incorrect & 0.018 & 0.908 &  0.908 & \vline  & 0.024 & 0.446 & 0.446  \\
		$\hat{\hat{\tau}}_{DRGLMM}$ & Correct & Correct  & 0.001 & 0.582 & 0.582 & \vline  & 0.013 & 0.288 & 0.288   \\
		$\hat{\tau}_{DRGLMM}$ & Correct & Incorrect & 0.001 & 0.582 & 0.582 & \vline  & 0.013 & 0.288 & 0.288   \\
		$\tilde{\tau}_{DRGLMM}$ & Incorrect & Correct& 0.054 & 0.608 & 0.608 & \vline  & 0.050 & 0.304 & 0.304   \\
		$\hat{\tilde{\tau}}_{DRGLMM}$ & Incorrect & Incorrect  & 0.694 &  0.627 & 0.686 & \vline  & 0.722 & 0.315 & 0.379   \\
		\hline
	\end{tabular}
	\begin{tablenotes}
		\small
		\item OR: Outcome regression estimate; GLLM: Generalized liner mixed model estimate;
		IPW: Inverse propensity weighted estimate;
		IPWDID: Inverse propensity weighted DID estimate; DRGLLM: Doubly robust GLLM estimate.
	\end{tablenotes}
	\label{Tab_3}
\end{table}

\begin{table}[!ht]
	\caption{Simulation results for estimation of ATT with heterogeneous treatment effects.}
	\centering
	\begin{tabular}{c c c c c c c c c c c c}
		\hline\hline
		&  & & & $n=250$ &  &  \vline  &  &  $n=500$ &   \\
		\hline
		Estimator &  OR/ GLLM & PS &  Bias $\times100$ & Var &  MSE & \vline  & Bias $\times100$ & Var &  MSE \\ 
		\hline
		$\hat{\kappa}_{OR}$ & Correct & -  & 0.144 & 0.986 & 0.986 & \vline  & -0.570 &  0.501 &  0.501  \\
		$\tilde{\kappa}_{OR}$ & Incorrect & -  &  108.770 & 1.609 & 2.793 & \vline  & 109.566 & 0.806 & 2.007  \\
		$\hat{\kappa}_{GLMM}$ & Correct & -  & -0.089 & 0.585 & 0.585 & \vline  &  -0.588 &  0.294 & 0.294   \\
		$\tilde{\kappa}_{GLMM}$ & Incorrect & -   &   23.945 & 0.638 & 0.695 & \vline  & 24.047 &  0.319 & 0.380   \\
		$\hat{\kappa}_{IPW}$ & - & Correct  & 12.895 & 4.235 & 4.251 & \vline  &  2.728 & 1.845 & 1.845   \\
		$\tilde{\kappa}_{IPW}$ & - & Incorrect  & 114.824 & 2.892 & 4.210 & \vline  & 112.846 & 1.467 &  2.600   \\
		$\hat{\kappa}_{IPWDID}$ & - & Correct & 0.583 &  1.011 &   1.011 & \vline  &  0.612 & 0.492 & 0.492   \\
		$\tilde{\kappa}_{IPWDID}$ & - & Incorrect &  0.034 & 0.953 &  0.953 & \vline  & 0.966 & 0.467 & 0.467  \\
		$\hat{\hat{\kappa}}_{DRGLMM}$ & Correct & Correct  &  0.022 & 0.591 & 0.591 & \vline  & 0.498 &  0.295 &  0.295   \\
		$\hat{\kappa}_{DRGLMM}$ & Correct & Incorrect &   -0.025 & 0.591 & 0.591 & \vline  & 0.653 & 0.307 & 0.307   \\
		$\tilde{\kappa}_{DRGLMM}$ & Incorrect & Correct&  1.625 & 0.622 & 0.622 & \vline  &  2.020 & 0.312 & 0.312   \\
		$\hat{\tilde{\kappa}}_{DRGLMM}$ & Incorrect & Incorrect  & 23.911 &  0.643 & 0.700 & \vline  &  24.85 &  0.320 & 0.382   \\
		\hline
	\end{tabular}
	\begin{tablenotes}
		\small
		\item OR: Outcome regression estimate; GLLM: Generalized liner mixed model estimate;
		IPW: Inverse propensity weighted estimate;
		IPWDID: Inverse propensity weighted DID estimate; DRGLLM: Doubly robust GLLM estimate.
	\end{tablenotes}
	\label{Tab_4}
\end{table}

\begin{table}[!ht]
	\caption{Simulation results for estimation of ATE with time-interactions and heterogeneous treatment effects.}
	\centering
	\begin{tabular}{c c c c c c c c c c c c}
		\hline\hline
		&  & & & $n=250$ &  &  \vline  &  &  $n=500$ &   \\
		\hline
		Estimator &  OR/ GLLM & PS &  Bias $\times100$ & Var &  MSE & \vline  & Bias $\times100$ & Var &  MSE \\ 
		\hline
		$\hat{\tau}_{OR}$ & Correct & -  & 0.479 & 0.928 & 0.928 & \vline  & 0.274 &  0.454 &  0.454  \\
		$\tilde{\tau}_{OR}$ & Incorrect & -  &  109.185 & 1.498 & 2.690 & \vline  & 109.910 & 0.741 &  1.949  \\
		$\hat{\tau}_{GLMM}$ & Correct & -  &  0.185 & 0.580 & 0.580 & \vline  &  0.500 & 0.287 & 0.287   \\
		$\tilde{\tau}_{GLMM}$ & Incorrect & -   & 106.830 & 1.156 & 2.297 & \vline  & 107.862 & 0.573 & 1.736   \\
		$\hat{\tau}_{IPW}$ & - & Correct  & 8.692 & 2.824 & 2.831 & \vline  &  2.492 &  1.204 & 1.204   \\
		$\tilde{\tau}_{IPW}$ & - & Incorrect  & 114.676 & 2.520 & 3.834 & \vline  & 113.820 & 1.124 & 2.419   \\
		$\hat{\tau}_{IPWDID}$ & - & Correct & 4.854 & 1.331 &  1.334 & \vline  &  1.795 &  0.625 &  0.625   \\
		$\tilde{\tau}_{IPWDID}$ & - & Incorrect & 112.586 &  1.601 & 2.869 & \vline  & 113.262 & 0.762 & 2.045  \\
		$\hat{\hat{\tau}}_{DRGLMM}$ & Correct & Correct  & 0.082 & 0.586 & 0.586 & \vline  &  0.502 & 0.290 & 0.290   \\
		$\hat{\tau}_{DRGLMM}$ & Correct & Incorrect & 0.086 & 0.586 & 0.586 & \vline  & 0.517 & 0.290 & 0.290   \\
		$\tilde{\tau}_{DRGLMM}$ & Incorrect & Correct& 82.556 & 0.955 & 1.637 & \vline  & 83.101 & 0.466 & 1.157   \\
		$\hat{\tilde{\tau}}_{DRGLMM}$ & Incorrect & Incorrect  & 107.484 &  1.164 & 2.319 & \vline  & 108.753 &  0.577 & 1.759   \\
		\hline
	\end{tabular}
	\begin{tablenotes}
		\small
		\item OR: Outcome regression estimate; GLLM: Generalized liner mixed model estimate;
		IPW: Inverse propensity weighted estimate;
		IPWDID: Inverse propensity weighted DID estimate; DRGLLM: Doubly robust GLLM estimate.
	\end{tablenotes}
	\label{Tab_5}
\end{table}

\begin{table}[!ht]
	\caption{Simulation results for estimation of ATT with time-interaction and heterogeneous treatment effects.}
	\centering
	\begin{tabular}{c c c c c c c c c c c c}
		\hline\hline
		&  & & & $n=250$ &  &  \vline  &  &  $n=500$ &   \\
		\hline
		Estimator &  OR/ GLLM & PS &  Bias $\times100$ & Var &  MSE & \vline  & Bias $\times100$ & Var &  MSE \\ 
		\hline
		$\hat{\kappa}_{OR}$ & Correct & -  & -0.676 & 1.010 & 1.010 & \vline  &  1.383 &   0.497 &   0.497  \\
		$\tilde{\kappa}_{OR}$ & Incorrect & -  &   107.835 & 1.601 & 2.764 & \vline  & 110.497 & 0.797 &  2.018  \\
		$\hat{\kappa}_{GLMM}$ & Correct & -  & -0.720 & 0.599 & 0.599 & \vline  &  1.477 &  0.294 & 0.294   \\
		$\tilde{\kappa}_{GLMM}$ & Incorrect & -   &   104.276 & 1.158 & 2.246 & \vline  & 106.876 &  0.584 & 1.726   \\
		$\hat{\kappa}_{IPW}$ & - & Correct  & 9.072 & 4.050 & 4.057 & \vline  &  5.828 &  1.722 &  1.725   \\
		$\tilde{\kappa}_{IPW}$ & - & Incorrect  & 112.967 & 2.871 & 4.147 & \vline  & 114.284 & 1.307 &  2.613   \\
		$\hat{\kappa}_{IPWDID}$ & - & Correct & 4.325 &  1.738 &   1.739 & \vline  &   4.859 & 0.795 & 0.797   \\
		$\tilde{\kappa}_{IPWDID}$ & - & Incorrect &  109.767 & 1.662 &  2.867 & \vline  & 114.033 & 0.805 & 2.106  \\
		$\hat{\hat{\kappa}}_{DRGLMM}$ & Correct & Correct  &  -0.734 & 0.605 & 0.605 & \vline  & 1.493 &  0.296 &  0.297   \\
		$\hat{\kappa}_{DRGLMM}$ & Correct & Incorrect &    -0.725 & 0.605 & 0.605 & \vline  & 1.447 &  0.297 &  0.297   \\
		$\tilde{\kappa}_{DRGLMM}$ & Incorrect & Correct&  80.527 & 0.969 & 1.618 & \vline  &  82.923 & 0.485 & 1.173   \\
		$\hat{\tilde{\kappa}}_{DRGLMM}$ & Incorrect & Incorrect  & 105.125 &  1.168 & 2.273 & \vline  &  107.715 &  0.588 & 1.748   \\
		\hline
	\end{tabular}
	\begin{tablenotes}
		\small
		\item OR: Outcome regression estimate; GLLM: Generalized liner mixed model estimate;
		IPW: Inverse propensity weighted estimate;
		IPWDID: Inverse propensity weighted DID estimate; DRGLLM: Doubly robust GLLM estimate.
	\end{tablenotes}
	\label{Tab_6}
\end{table}

\clearpage

\subsection{Sensitivity Analysis in Presence of Unobserved Heterogeneity }\label{Sensitivity}
In order to study the effect of unobserved heterogeneity on the performance of the competitive estimators, we first consider the following model for data generation:
\begin{eqnarray}\label{Model-III}
Y_{it}&=& \theta_{0i} + \gamma t + \beta D_{it} + \theta_{1i}X_{1it} + \theta_{2i}X_{2i} + \theta_{3i}D_{it}X_{1it}+  u_{i} + \epsilon_{it},  \nonumber
\end{eqnarray}
where $\theta_{0i}$, $\theta_{1i}$, $\theta_{2i}$, $\theta_{3i}$, $\gamma_i$ and $\beta_i$ follows independent Gaussian distribution with mean 10, 1, 3, 1, 2, 15, respectively, and standard deviation as 10\% of the corresponding mean value.  The results are presented in the Tables \ref{Tab_7}-\ref{Tab_8}. We also consider the following model:
\begin{eqnarray}\label{Model-III}
Y_{it}&=& \theta_{0i} + \gamma t + \beta D_{it} + \theta_{1i}X_{1it} + \theta_{2i}t X_{2i} + \theta_{3i}D_{it}X_{1it}+  u_{i} + \epsilon_{it},  \nonumber
\end{eqnarray}
where interaction of time $t$ with $X_{2i}$ is considered instead of $X_{2i}$, and the results are presented in Table  \ref{Tab_9}-\ref{Tab_10}. In comparison with the results presented in Tables \ref{Tab_3}-\ref{Tab_6}, both the bias and variance of the estimators are marginally higher due to the presence of unobserved heterogeneity. However, the relative performance of the estimators with respect to mean squared error remains the same. Here also, we observe that the MSE of the estimates increases in the absence of random effects for the unaccounted between-cluster heterogeneity.

\begin{table}[!ht]
	\caption{Simulation results for estimation of ATE with unobserved heterogeneity.}
	\centering
	\begin{tabular}{c c c c c c c c c c c c}
		\hline\hline
		&  & & & $n=250$ &  &  \vline  &  &  $n=500$ &   \\
		\hline
		Estimator &  OR/ GLLM & PS &  Bias $\times100$ & Var &  MSE & \vline  & Bias $\times100$ & Var &  MSE \\ 
		\hline
		$\hat{\tau}_{OR}$ & Correct & -  & -0.891 & 1.073 & 1.073 & \vline  & -1.413 &  0.541 &  0.541  \\
		$\tilde{\tau}_{OR}$ & Incorrect & -  &  109.112 & 1.657 & 2.847 & \vline  & 107.245 & 0.842 &  1.992  \\
		$\hat{\tau}_{GLMM}$ & Correct & -  & -0.586 & 0.644 & 0.644 & \vline  &  -1.393 & 0.324 & 0.324   \\
		$\tilde{\tau}_{GLMM}$ & Incorrect & -   & 24.904 & 0.678 & 0.740 & \vline  & 24.009 & 0.343 & 0.401   \\
		$\hat{\tau}_{IPW}$ & - & Correct  & 5.365 & 3.094 & 3.096 & \vline  &  0.925 & 1.449 & 1.449   \\
		$\tilde{\tau}_{IPW}$ & - & Incorrect  & 113.703 & 2.692 & 3.985 & \vline  & 110.791 & 1.246 & 2.473   \\
		$\hat{\tau}_{IPWDID}$ & - & Correct &  0.410 & 0.950 &  0.950 & \vline  &  -1.004 & 0.492 & 0.492   \\
		$\tilde{\tau}_{IPWDID}$ & - & Incorrect & -0.297 & 1.012 & 1.012 & \vline  & -1.122 & 0.469 & 0.470  \\
		$\hat{\hat{\tau}}_{DRGLMM}$ & Correct & Correct  & -0.766 & 0.651 & 0.651 & \vline  & -1.361 & 0.327 & 0.327   \\
		$\hat{\tau}_{DRGLMM}$ & Correct & Incorrect & -0.793 & 0.651 & 0.651 & \vline  & -1.376 & 0.327 & 0.327   \\
		$\tilde{\tau}_{DRGLMM}$ & Incorrect & Correct& 1.001 & 0.665 & 0.665 & \vline  &  0.013 & 0.331 & 0.331   \\
		$\hat{\tilde{\tau}}_{DRGLMM}$ & Incorrect & Incorrect  & 24.449 &  0.684 & 0.744 & \vline  & 23.993 & 0.345 & 0.403   \\
		\hline
	\end{tabular}
	\begin{tablenotes}
		\small
		\item OR: Outcome regression estimate; GLLM: Generalized liner mixed model estimate;
		IPW: Inverse propensity weighted estimate;
		IPWDID: Inverse propensity weighted DID estimate; DRGLLM: Doubly robust GLLM estimate.
	\end{tablenotes}
	\label{Tab_7}
\end{table}

\begin{table}[!ht]
	\caption{Simulation results for estimation of ATT with unobserved heterogeneity.}
	\centering
	\begin{tabular}{c c c c c c c c c c c c}
		\hline\hline
		&  & & & $n=250$ &  &  \vline  &  &  $n=500$ &   \\
		\hline
		Estimator &  OR/ GLLM & PS &  Bias $\times100$ & Var &  MSE & \vline  & Bias $\times100$ & Var &  MSE \\ 
		\hline
		$\hat{\kappa}_{OR}$ & Correct & -  & 0.616 & 1.195 & 1.195 & \vline  & 0.399 &  0.573 &  0.573  \\
		$\tilde{\kappa}_{OR}$ & Incorrect & -  &  109.886 & 1.824 & 3.027 & \vline  & 109.700 & 0.863 & 2.064  \\
		$\hat{\kappa}_{GLMM}$ & Correct & -  & 1.721 & 0.685 & 0.685 & \vline  &  -0.014 &  0.330 & 0.330   \\
		$\tilde{\kappa}_{GLMM}$ & Incorrect & -   &   26.807 & 0.713 & 0.784 & \vline  &  24.947 &  0.345 & 0.409   \\
		$\hat{\kappa}_{IPW}$ & - & Correct  & 9.727& 4.890 & 4.890 & \vline  &  3.712 & 1.947 & 1.947   \\
		$\tilde{\kappa}_{IPW}$ & - & Incorrect  & 113.146 & 3.136 & 4.412 & \vline  & 112.9 & 1.418 &  2.690   \\
		$\hat{\kappa}_{IPWDID}$ & - & Correct & 2.363 &  1.102 &   1.103 & \vline  &   -0.018 & 0.513 & 0.513   \\
		$\tilde{\kappa}_{IPWDID}$ & - & Incorrect &  1.757 & 1.033 &  1.033 & \vline  & -0.224 & 0.487 & 0.487  \\
		$\hat{\hat{\kappa}}_{DRGLMM}$ & Correct & Correct  &  1.812 & 0.691 & 0.691 & \vline  &  0.015 &  0.332 &  0.332   \\
		$\hat{\kappa}_{DRGLMM}$ & Correct & Incorrect &    1.790 & 0.691 & 0.692 & \vline  &  0.087 & 0.332 & 0.332   \\
		$\tilde{\kappa}_{DRGLMM}$ & Incorrect & Correct&  3.809 & 0.703 & 0.704 & \vline  &  1.349 & 0.341 & 0.341   \\
		$\hat{\tilde{\kappa}}_{DRGLMM}$ & Incorrect & Incorrect  & 26.620 &  0.718 & 0.788 & \vline  &  24.958 &  0.348 & 0.410   \\
		\hline
	\end{tabular}
	\begin{tablenotes}
		\small
		\item OR: Outcome regression estimate; GLLM: Generalized liner mixed model estimate;
		IPW: Inverse propensity weighted estimate;
		IPWDID: Inverse propensity weighted DID estimate; DRGLLM: Doubly robust GLLM estimate.
	\end{tablenotes}
	\label{Tab_8}
\end{table}

\begin{table}[!ht]
	\caption{Simulation results for estimation of ATE with time-interaction and unobserved heterogeneity.}
	\centering
	\begin{tabular}{c c c c c c c c c c c c}
		\hline\hline
		&  & & & $n=250$ &  &  \vline  &  &  $n=500$ &   \\
		\hline
		Estimator &  OR/ GLLM & PS &  Bias $\times100$ & Var &  MSE & \vline  & Bias $\times100$ & Var &  MSE \\ 
		\hline
		$\hat{\tau}_{OR}$ & Correct & -  & -1.365 & 1.081 & 1.081 & \vline  & -0.615 &  0.544 &  0.544  \\
		$\tilde{\tau}_{OR}$ & Incorrect & -  &  108.230 & 1.651 & 2.822 & \vline  & 108.947 & 0.830 &  2.017  \\
		$\hat{\tau}_{GLMM}$ & Correct & -  & -0.075 & 0.653 & 0.653 & \vline  &  -0.604 & 0.327 & 0.327   \\
		$\tilde{\tau}_{GLMM}$ & Incorrect & -   & 107.348 & 1.242 & 2.394 & \vline  & 106.692 & 0.621 & 1.760   \\
		$\hat{\tau}_{IPW}$ & - & Correct  & 4.058 & 2.904 & 2.905 & \vline  &  1.161 & 1.298 & 1.298   \\
		$\tilde{\tau}_{IPW}$ & - & Incorrect  & 113.051 & 2.601 & 3.879 & \vline  & 111.553 & 1.248 & 2.493   \\
		$\hat{\tau}_{IPWDID}$ & - & Correct & 3.209 & 1.392 &  1.393 & \vline  &  0.742 & 0.628 & 0.628   \\
		$\tilde{\tau}_{IPWDID}$ & - & Incorrect & 113.175 & 1.625 & 2.905 & \vline  & 111.632 & 0.793 & 2.039  \\
		$\hat{\hat{\tau}}_{DRGLMM}$ & Correct & Correct  & -0.124 & 0.658 & 0.658 & \vline  & -0.683 & 0.331 & 0.331   \\
		$\hat{\tau}_{DRGLMM}$ & Correct & Incorrect & -0.090 & 0.660 & 0.659 & \vline  & -0.643 & 0.331 & 0.331   \\
		$\tilde{\tau}_{DRGLMM}$ & Incorrect & Correct& 83.290 & 1.038 & 1.737 & \vline  & 82.363 & 0.520 & 1.198   \\
		$\hat{\tilde{\tau}}_{DRGLMM}$ & Incorrect & Incorrect  & 108.199 &  1.252 & 2.423 & \vline  & 107.519 & 0.627 & 1.783   \\
		\hline
	\end{tabular}
	\begin{tablenotes}
		\small
		\item OR: Outcome regression estimate; GLLM: Generalized liner mixed model estimate;
		IPW: Inverse propensity weighted estimate;
		IPWDID: Inverse propensity weighted DID estimate; DRGLLM: Doubly robust GLLM estimate.
	\end{tablenotes}
	\label{Tab_9}
\end{table}

\begin{table}[!ht]
	\caption{Simulation results for estimation of ATT with time-interaction and unobserved heterogeneity.}
	\centering
	\begin{tabular}{c c c c c c c c c c c c}
		\hline\hline
		&  & & & $n=250$ &  &  \vline  &  &  $n=500$ &   \\
		\hline
		Estimator &  OR/ GLLM & PS &  Bias $\times100$ & Var &  MSE & \vline  & Bias $\times100$ & Var &  MSE \\ 
		\hline
		$\hat{\kappa}_{OR}$ & Correct & -  &  -0.269 & 1.151 & 1.151 & \vline  & 0.362 &  0.567 &  0.567  \\
		$\tilde{\kappa}_{OR}$ & Incorrect & -  &  109.024 & 1.751 & 2.938 & \vline  & 109.822 & 0.867 &  2.071  \\
		$\hat{\kappa}_{GLMM}$ & Correct & -  &  -0.456 & 0.671 & 0.671 & \vline  &  0.251 & 0.324 & 0.324   \\
		$\tilde{\kappa}_{GLMM}$ & Incorrect & -   &  105.590 & 1.257 & 2.370 & \vline  & 106.059 & 0.616 & 1.739   \\
		$\hat{\kappa}_{IPW}$ & - & Correct  & 10.907 & 4.393 & 4.405 & \vline  &  5.382 & 1.841 & 1.844   \\
		$\tilde{\kappa}_{IPW}$ & - & Incorrect  &  115.473 & 2.988 & 4.319 & \vline  & 114.356 & 1.383 & 2.689   \\
		$\hat{\kappa}_{IPWDID}$ & - & Correct &  5.222 & 1.887 &  1.890 & \vline  &  2.935 & 0.831 & 0.832   \\
		$\tilde{\kappa}_{IPWDID}$ & - & Incorrect & 111.832 & 1.751 & 2.300 & \vline  & 112.449 & 0.824 & 2.086  \\
		$\hat{\hat{\kappa}}_{DRGLMM}$ & Correct & Correct  & -0.546 & 0.677 & 0.677 & \vline  & 0.189 & 0.327 & 0.327   \\
		$\hat{\kappa}_{DRGLMM}$ & Correct & Incorrect & -0.469 & 0.677 & 0.677 & \vline  &  0.221 & 0.327 & 0.327   \\
		$\tilde{\kappa}_{DRGLMM}$ & Incorrect & Correct& 82.230 & 1.070 & 1.744 & \vline  & 82.488 & 0.522 & 1.201   \\
		$\hat{\tilde{\kappa}}_{DRGLMM}$ & Incorrect & Incorrect  & 106.241 &  1.267 & 2.394 & \vline  & 106.825 & 0.620 & 1.759   \\
		\hline
	\end{tabular}
	\begin{tablenotes}
		\small
		\item OR: Outcome regression estimate; GLLM: Generalized liner mixed model estimate;
		IPW: Inverse propensity weighted estimate;
		IPWDID: Inverse propensity weighted DID estimate; DRGLLM: Doubly robust GLLM estimate.
	\end{tablenotes}
	\label{Tab_10}
\end{table}

\section{Quantifying Causal Effects of Cycle Superhighways on Traffic Volume and Speed}\label{Data}
As discussed in Section \ref{intro}, we analyse the causal effect of CS on traffic congestion based on a dataset collected over the period 2007-2014. In this study, 75 treated zones and 375 control zones were selected using stratified random sampling along the 40 km long main corridors radiating from central London to outer London such that interference among the treated and control units is minimal.

In observational data the effect of CS is confounded due to various factors related to traffic dynamics, road characteristics, and socio-demographic conditions. The traffic data on the major road network as well as on the minor road network are collected by The Department for Transport \citep{DFT}. Also, additional information on traffic flow and speed are collected from the London Atmospheric Emissions Inventory (LAEI). 
It is observed that traffic congestion is associated with bus-stop density and road network density \cite{Bus}. An association between traffic congestion and socio-demographic characteristics, such as employment and land-use patterns, has also been indicated in previous studies \citep{Land1, Land2, Employ}. To incorporate these effects, we obtained relevant data on population and employment density, as well as the information of land-use patterns from the Office for National Statistics. The traffic characteristics are time-varying but the road characteristics, land-use patterns and employment density remain unchanged over the period of our study. The data that are available from the aforementioned sources and the logic to construct responses and covariates are described below.
\begin{itemize}
	\item[(a)] Annual average daily traffic (AADT) -- the total volume of vehicle traffic of a highway or road for a year divided by 365 days. To measure AADT on individual road segments, traffic data is collected by an automated traffic counter, hiring an observer to record traffic or licensing estimated counts from GPS data providers. AADT is a simple, but useful, measurement to indicate busyness of a road.
	\item[(b)] Traffic speed -- calculated using time-mean-speed method based on the individual speed records for vehicles passing a point over a selected time period. Speed is also a fundamental measurement in transport engineering and used for maintaining a designated level of service.
	\item[(c)] Total Cycle Collisions (TCS) -- total number of injured cycle collisions based on police records from the STATS19 accident reporting form and collected by the UK Department for Transport. The location of an accident is recorded using coordinates which are in accordance with the British National Grid coordinate system. The CS routes were intended to reduce the risk of accidents for cyclists and the route allocation is possibly influenced by TCS. But, previous studies indicate that CS routes are not more dangerous or safer than the control roads \citep{Safety}. It is expected that the accident rates will affect the traffic characteristics.
	\item[(d)] Bus-stop density -- the ratio of the number of bus-stops to the road length. The presence of bus-stops is expected to affect the traffic flow and speed due to frequent bus-stops and pedestrian activities.  The allocation of CS routes were designed to avoid areas with high bus-stop density for safety of the cyclists.
	\item[(e)] Road network density -- with the available geographical information system we could also represent the road network density in each zone by using a measure of the number of network nodes per unit of area. A network node is defined as the meeting point of two or more links. To safeguard from conflicting turning movements the CS paths are routed through the areas with high road network density.  
	\item[(f)] Road length -- high capacity networks tend to depress land values which in turn will influence the socio-economic profile of the people who live close together. Data for road length for each zone was generated using geographical information system software.
	\item[(g)] Road type -- a binary variable where `1' represents dual-carriageway and `0' represents single-carriage. This is an important feature since we might expect traffic congestion in single-carriage roads.
	\item[(h)] Density of domestic buildings -- this is a potentially useful feature since we might expect congestion to be associated with the nature of land use and the degree of urbanization. Also, the allocation of the CS paths are possibly influenced by land use characteristics.
	\item[(i)] Density of non-domestic buildings -- rising housing costs in business and office districts force people to live further away, lengthening commutes, and affecting traffic flow and speed. As mentioned before, this feature may influence allocation of the CS paths.
	\item[(j)] Road area density -- the ratio of the area of the zone's total road network to the land area of the zone. The road network includes all roads in the including motorways, highways, main or national roads, secondary or regional roads. It is expected that the traffic flow is associated with road density. 
	\item[(k)] Employment density -- traffic generation potential depends on economic activity and we proxy this by employment density. High employment density tends to influence pedestrian activity which in turn affects traffic speed. The CS paths are designed to provide coverage in the areas with high employment density and encourage commuters to use cycling as a regular mode of transport. 
	\item[(l)] Time -- in a longitudinal study, time itself may be a confounder because government policies or other interventions could simultaneously affect AADT and speed. Fuel taxation or motoring policies, for instance, provide relevant examples.
\end{itemize}

First, we analyse the causal effect of CS on AADT considering the sources of confounding  as mentioned in (c)-(l). We consider a generalized additive model to estimate the PS based on pre-intervention measurements. We used backward elimination following \citet{Beng} for covariate selection to avoid issues related to multicollinearity. At each step, the covariate with the largest p-value was dropped one at a time until all covariates are significant with a cut-point of p-value = 0.10. In this process, the final PS model include the following factors: TCS, bus-stop density, road  network  density, road length, density  of  domestic  buildings, density of non-domestic buildings, road area density, and employment density. To test our PS specification we check for balancing. In a similar manner to \citet{Diag1} and \citet{Diag2}, we regress the exposure using a GAM on the covariates, and the estimated PS up to a cubic term. 
The adjusted $R^2$ value obtained from the GAM without covariate is greater than that of with covariates.
This result suggests that the balancing property has been achieved for our PS specification as the inclusion of covariates leads to a deterioration in model adequacy.

To model AADT we consider a Gaussian GLMM and found that BIC values support identity rather than log link function. The GLMM model with gamma family encounters convergence issues. A completely analogous algorithm is used for variable selection and the final model included exposure along with density of domestic  buildings, density of non-domestic  buildings, road area density, road network density, road length, road type, bus-stop density and its interaction with the exposure, time and its interaction with density of non-domestic  buildings. Similarly, we built a DR model using the approach described in Subsection \ref{DR}. For this augmented GLMM model we consider four dummy variables based on equally distributed classification of the estimated PS. The models are estimated by maximum likelihood using the {\ttfamily{nlme}} package in R. Some elements of $\boldsymbol{\hat{\zeta}}$ are significantly different from zero, which indicates some deficiency in the GLMM model. To investigate further, we conduct a formal diagnostic tests using the approach proposed by \citet{Diag3}. We compute empirical variances $\sigma^{2}_{DRI}$  and $\sigma^{2}_{DRG}$
of $\hat{\tau}_{DRGLLM} - \hat{\tau}_{IPWDID}$ and $\hat{\tau}_{DRGLLM} - \hat{\tau}_{GLLM}$, respectively, based on 10000 nonparametric bootstrap replications. Then we consider the test statistics $|\frac{\hat{\tau}_{DRGLLM} - \hat{\tau}_{IPWDID}}{\sigma_{DRI}}|$ and $|\frac{\hat{\tau}_{DRGLLM} - \hat{\tau}_{GLLM}}{\sigma_{DRG}}|$ with critical region 1.96 (for level 0.05) to test the the null hypotheses that the PS model and the GLLM model, respectively, are correctly specified. Similar tests are also conducted based on the estimates of ATT. We do not find any evidence of misspecification for both the PS and GLLM model.

The estimated ATE and ATT of CS on AADT relative to the average AADT in the pre-intervention period are presented in Table \ref{Result_AADT}. The standard errors (SE) and the corresponding 95\% confidence intervals (CI)  for all the estimates are obtained from 10000 nonparametric bootstrap samples. The results indicate a reduction in traffic flow compared to the pre-intervention period. The 95\% bootstrap CI exclude $0$, suggesting a significant effect of CS on AADT. The estimates of ATE and ATT vary within $-9.48\%$ to $-6.47\%$, and $-7.03\%$ to $-6.54\%$, respectively, but the GLMM method provides results similar to the DR method. Interestingly, the overall reduction in traffic flow in London is slightly more than those of the treated locations based on the estimates from GLLM and DR methods. The SE of $\hat{\tau}_{IPWDID}$ is higher than those of $\hat{\tau}_{GLMM}$ and  $\hat{\tau}_{DRGLMM}$, which is also reflected in the width of the confidence intervals. Similar patterns are observed for the estimates of ATT. The bootstrap mean of all the estimators is similar to the estimates and that indicates a small bias if the underlying modeling assumptions hold.

\begin{table}[!ht]
	\centering
	\caption{Effect of CS on AADT relative to the average AADT in the pre-intervention period}
	\resizebox{\columnwidth}{2 cm}{
		\begin{tabular}{c c c c c c c}
			\hline\hline
			Estimand &  Estimator &	 Estimate (\%) & Mean & SE & $95\%$ CI \\ 
			\hline
			&	$\hat{\tau}_{IPWDID}$    &   -6.467 & -6.827 & 3.119 & (-14.141,  -1.875)   \\
			ATE	&	$\hat{\tau}_{GLMM}$  & -9.371 & -9.739 & 1.769 &    (-13.205, -6.334)   \\
			&	$\hat{\tau}_{DRGLMM}$ &  -9.479 &  -9.785 & 1.772 &  (-13.261,  -6.370)  \\
			\hline
			&	$\hat{\kappa}_{IPWDID}$    &  -6.536 & -6.452 & 1.482 & (-9.261,  -3.487)   \\
			ATT &		$\hat{\kappa}_{GLMM}$  & -6.950 & -7.302 & 1.195 & (-9.652, -4.986)   \\
			&		$\hat{\kappa}_{DRGLMM}$ &  -7.026 &  -7.334 & 1.197 &  (-9.684,  -5.007)  \\
			\hline
	\end{tabular}}
	\begin{tablenotes}
		\small
		\item IPWDID: Inverse propensity weighted DID estimate;
		\item GLLM: generalized liner mixed model; DRGLLM: Doubly robust GLLM estimate.
	\end{tablenotes}
	\label{Result_AADT}
\end{table}

Next, we perform a similar analysis to estimate the ATE and ATT of CS on traffic speed considering the sources of confounding  as mentioned in (c)-(l). We model speed using a Gaussian GLMM with identity link function and the final model included exposure along with TCS, density of domestic buildings, density of non-domestic buildings, bus-stop density, road type, road length and its interaction with the exposure, time and its interaction with density of domestic buildings. Here also, we do not find any evidence of misspecification for the PS model, but some deficiency in the GLMM model is indicated by the diagnostic tests. 

The estimated ATE and ATT of CS on traffic speed relative to the average speed in the pre-intervention period are presented in Table \ref{Result_speed}. It is observed that the estimates of the change in traffic speed based on all the methods are insignificant. It is not unexpected that we do not observe changes in traffic speed, there are many factors associated with the transport network and other interventions can play a crucial role in mitigating potential traffic problems anticipated by the introduction of cycle lanes. Important factors contributing to the levels of congestion are traffic speed and flow. Considering both the analyses presented here, it can be inferred that Cycle Superhighways are an effective intervention that can potentially alleviate traffic congestion in London.

\begin{table}[!ht]
	\centering
	\caption{Effect of CS on speed relative to the average speed in the pre-intervention period}
	\resizebox{\columnwidth}{2 cm}{
		\begin{tabular}{c c c c c c}
			\hline\hline
			Estimand &  Estimator &	 Estimate (\%) & Mean & SE & $95\%$ CI \\
			\hline
			&	$\hat{\tau}_{IPWDID}$    &   -0.423 & -0.157 &  2.294 & (-3.840,  5.082)   \\
			ATE	&	$\hat{\tau}_{GLMM}$  & 1.787 & 1.416 & 1.230 &    (-0.895,  3.919)   \\
			&	$\hat{\tau}_{DRGLMM}$ & 1.640 &  1.330 & 1.234 &  (-0.990, 3.834)  \\
			\hline
			&	$\hat{\kappa}_{IPWDID}$    &  1.025 & 1.025 & 1.923 & (-3.028,  4.577)   \\
			ATT &		$\hat{\kappa}_{GLMM}$  & 2.245 & 1.821 & 1.399 & (-0.713, 4.795)   \\
			&		$\hat{\kappa}_{DRGLMM}$ & 2.096 &  1.736 & 1.403 &  (-0.813, 4.704)  \\
			\hline
	\end{tabular}}
	\begin{tablenotes}
		\small
		\item IPWDID: Inverse propensity weighted DID estimate;
		GLLM: generalized liner mixed model estimate; DRGLLM: Doubly robust GLLM estimate.
	\end{tablenotes}
	\label{Result_speed}
\end{table}

\section{Concluding Remarks}\label{Conc}
This paper has presented a statistical framework that can be used to derive inference for causal quantities based on pre-intervention and post-intervention data motivated from a case study on the London transport network. However, the scope of the proposed methods go far beyond this particular application. The key methodological insight is that extending the traditional OR model within a GLMM set-up, which is able to represent both time-varying and time-invariant confounding and accounts for the serial correlations in the data. The Inverse propensity weighted difference in difference estimate is an attractive alternative because it avoids any parametric assumptions associated with the form of the regression and/or link function which is essential in the GLMM approach. The proposed DR approach works if either of the GLMM or PS model is correct. This method can be easily extended for the cases with observations on multiple time periods. Our results suggest that the introduction of Cycle Superhighways can reduce traffic flow, but we find marginal improvement in traffic speed. Providing evidence that Cycle Superhighways can be an effective intervention in metropolitan cities like London, which are heavily affected by congestion. It is also worth noting that we are obliged to assume that the outcomes of one unit are not affected by the treatment assignment of any other units. While not always plausible, we tried to reduce the “spillover” effects by reducing interactions between the treated and control
units. One possible direction of further studies could be using an improved design and define several types of treatment effects following \citet{Hudgens_2008, Tchetgen_2012, Brian_2020}, and develop associated estimation methodologies for the setting where there may be clustered interference. In line with our application, we have also assumed that the time-varying confounders are not affected by treatment. However, this may not be a realistic assumption in many scenarios, and one may consider marginal structural models for a valid inference on causal quantities \citep{MSM,Daniel}. In the context of high-dimensional data with a large number of potentially confounding variables, standard model selection techniques are cumbersome and may not provide valid inferences about causal quantities \citep{Belloni_2014,Tan}. It could be another interesting problem to extend the proposed inferential methods in a high-dimensional setting under a sparse structure as considered in \citet{Belloni_2017} and \citet{Tan}.

In recent years, London's air quality has improved as a result of policies to reduce emissions, primarily from road transport, although significant areas still exceed NO2 EU limits. The Cycle Superhighways are one of the several interventions introduced which may results in an improvement in air quality. However, the effect of Cycle Superhighways on air quality is debatable, and is an interesting research problem that could be studied under the same causal analysis set-up outlined here.

\section*{\small Acknowledgement}
The authors would like to acknowledge the Lloyd’s Register Foundation for funding this research through the programme on Data-Centric Engineering at the Alan Turing Institute.

\begin{appendix}
	\section*{}
	\begin{theorem}
		The inverse propensity weighted difference-in-difference estimator $\hat{\tau}_{IPWDID}$ is a consistent estimator of $\tau_{0}$ when the propensity score model $\pi(D_{i1}|\boldsymbol{X_{i0}}, \hat{\alpha})$ is correctly specified, and $\mathbb{E}[Y_{i0}(1)]=\mathbb{E}[Y_{i0}(0)]$.
	\end{theorem}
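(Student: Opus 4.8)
The plan is to prove $\hat{\tau}_{IPWDID}\xrightarrow{p}\tau_{0}$ by first passing from the estimated propensity score to its population target, and then evaluating the resulting expectation using the identification assumptions of Section~\ref{Prop}.

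First I would remove the estimation error in the propensity score. Under correct specification and standard regularity conditions, the maximum likelihood estimator satisfies $\hat{\alpha}\xrightarrow{p}\alpha_{0}$, and strict overlap, $0<\pi(D_{i1}|\boldsymbol{X_{i0}},\alpha_{0})<1$, keeps the weights $1/\pi$ and $1/(1-\pi)$ bounded in a neighbourhood of $\alpha_{0}$. Combining a uniform law of large numbers over that neighbourhood with continuity of the weights in $\alpha$ (equivalently, a summand-by-summand Slutsky argument) lets me replace $\pi(D_{i1}|\boldsymbol{X_{i0}},\hat{\alpha})$ by the true score $e(\boldsymbol{X_{i0}}):=P[D_{i1}=1|\boldsymbol{X_{i0}}]$ up to an $o_{p}(1)$ term, so that
\begin{align*}
\hat{\tau}_{IPWDID}\ \xrightarrow{p}\ & \mathbb{E}\!\left[\frac{I(D_{i1})Y_{i1}}{e(\boldsymbol{X_{i0}})}-\frac{(1-I(D_{i1}))Y_{i1}}{1-e(\boldsymbol{X_{i0}})}\right]\\
&-\mathbb{E}\!\left[\frac{I(D_{i1})Y_{i0}}{e(\boldsymbol{X_{i0}})}-\frac{(1-I(D_{i1}))Y_{i0}}{1-e(\boldsymbol{X_{i0}})}\right].
\end{align*}

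Next I would evaluate each of the two expectations by a Horvitz--Thompson identity. Fixing $t\in\{0,1\}$, the consistency assumption gives $I(D_{i1})Y_{it}=I(D_{i1})Y_{it}(1)$ and $(1-I(D_{i1}))Y_{it}=(1-I(D_{i1}))Y_{it}(0)$; conditioning on $(\boldsymbol{X_{i0}},\boldsymbol{X_{i1}})$, unconfoundedness $(Y_{it}(0),Y_{it}(1))\independent I(D_{i1})|(\boldsymbol{X_{i1}},\boldsymbol{X_{i0}})$ factorises the conditional expectation, while the modelling restriction that $D_{i1}$ depends only on $\boldsymbol{X_{i0}}$ gives $\mathbb{E}[I(D_{i1})|\boldsymbol{X_{i0}},\boldsymbol{X_{i1}}]=e(\boldsymbol{X_{i0}})$, which cancels the weight. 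Taking the outer expectation yields $\mathbb{E}[I(D_{i1})Y_{it}/e(\boldsymbol{X_{i0}})]=\mathbb{E}[Y_{it}(1)]$ and, symmetrically, $\mathbb{E}[(1-I(D_{i1}))Y_{it}/(1-e(\boldsymbol{X_{i0}}))]=\mathbb{E}[Y_{it}(0)]$. The first expectation in the display therefore equals $\mathbb{E}[Y_{i1}(1)]-\mathbb{E}[Y_{i1}(0)]=\tau_{0}$, while the second equals $\mathbb{E}[Y_{i0}(1)]-\mathbb{E}[Y_{i0}(0)]$, which vanishes by the hypothesis $\mathbb{E}[Y_{i0}(1)]=\mathbb{E}[Y_{i0}(0)]$; hence $\hat{\tau}_{IPWDID}\xrightarrow{p}\tau_{0}$.

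The hard part will be the first step: the estimated weights enter nonlinearly, so controlling the randomness of $\hat{\alpha}$ uniformly requires a dominating-function or stochastic-equicontinuity condition in addition to strict overlap, and I would state these regularity conditions explicitly. A smaller point to watch is the pre-intervention term, where one must verify that the consistency relation at $t=0$ (under which $Y_{i0}=Y_{i0}(0)$) is exactly what, together with the equal-means hypothesis, forces the claimed cancellation; everything else is routine iterated-expectation bookkeeping.
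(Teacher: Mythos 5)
Your identification step is exactly the paper's: the same chain of consistency, iterated expectation conditioning on $(Y_{it}(1),\boldsymbol{X_{i0}},\boldsymbol{X_{i1}})$, unconfoundedness, and the restriction $\mathbb{E}[I(D_{i1})\mid \boldsymbol{X_{i0}},\boldsymbol{X_{i1}}]=\pi(D_{i1}\mid\boldsymbol{X_{i0}},\alpha_{0})$ to cancel the weight and obtain $\mathbb{E}[I(D_{i1})Y_{it}/\pi]=\mathbb{E}[Y_{it}(1)]$ and its counterpart for the controls, after which the pre-period difference is killed by the hypothesis $\mathbb{E}[Y_{i0}(1)]=\mathbb{E}[Y_{i0}(0)]$. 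Where you differ is in how the estimated nuisance $\hat{\alpha}$ is handled: the paper stacks the score equation for $\alpha$ together with the moment equations defining $\hat{\Delta}_{t}^{(1)}$ and $\hat{\Delta}_{t}^{(0)}$ into a single system of estimating equations and invokes standard M-estimation consistency (Huber/Serfling), which absorbs the plug-in of $\hat{\alpha}$ without a separate uniformity argument; you instead run a two-step argument (consistency of $\hat{\alpha}$, then a uniform law of large numbers with overlap-based domination to swap $\hat{\alpha}$ for $\alpha_{0}$). Both are legitimate; the M-estimation packaging is tidier and lets the regularity conditions be cited rather than stated, while your route makes explicit exactly where stochastic equicontinuity is needed --- you correctly identify this as the only genuinely delicate point. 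One further remark in your favour: your caution about the $t=0$ term is well placed. Since $D_{i0}=0$ for all units, the consistency assumption gives $Y_{i0}=Y_{i0}(0)$, so $I(D_{i1})Y_{i0}=I(D_{i1})Y_{i0}(0)$ rather than $I(D_{i1})Y_{i0}(1)$ as the paper's displayed identity asserts for $t=0$; the pre-period difference then converges to $\mathbb{E}[Y_{i0}(0)]-\mathbb{E}[Y_{i0}(0)]=0$ directly, and the equal-means hypothesis merely reconciles this with the paper's bookkeeping. Your version of the cancellation is the more careful one.
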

	
	\begin{proof} 
		Following \citet{Dav}, we consider the following estimating equations
		\sloppy
		\begin{eqnarray}
		\sum_{i=1}^{n}\dfrac{D_{i1}-\pi(D_{i1}|\boldsymbol{X_{i0}},\hat{\alpha})}{\pi(D_{i1}|\boldsymbol{X_{i0}},\hat{\alpha})\left[1-\pi(D_{i1}|\boldsymbol{X_{i0}},\hat{\alpha}\right)]}\frac{\partial }{\partial \hat{\alpha}}\pi(D_{i1}|\boldsymbol{X_{i0}},\hat{\alpha})=0,
		\label{ML}
		\end{eqnarray}
		\begin{eqnarray}
		\sum_{i=1}^{n}\left[\dfrac{I(D_{i1})Y_{it}}{\pi(D_{i1}|\boldsymbol{X_{i0}},\hat{\alpha})}- \hat{\Delta}_{t}^{(1)}\right]=0,
		\label{Del1}
		\end{eqnarray}
		and
		\begin{eqnarray}
		\sum_{i=1}^{n}\left[\dfrac{\left[1-I(D_{i1})\right]Y_{it}}{1-\pi(D_{i1}|\boldsymbol{X_{i0}},\hat{\alpha})}- \hat{\Delta}_{t}^{(0)}\right]=0.
		\label{Del2}
		\end{eqnarray}
		The maximum likelihood estimate $\hat{\alpha}$, $\hat{\Delta}_{t}^{(1)}=\displaystyle\frac{1}{n}\sum_{i=1}^{n}\dfrac{I(D_{i1})Y_{it}}{\pi(D_{i1}|\boldsymbol{X_{i0}},\hat{\alpha})}$, and $\hat{\Delta}_{t}^{(0)}=\displaystyle\frac{1}{n}\sum_{i=1}^{n}\dfrac{\left[1-I(D_{i1})\right]Y_{it}}{1-\pi(D_{i1}|\boldsymbol{X_{i0}},\hat{\alpha})}$ satisfies equation \ref{ML}-\ref{Del2}, for $t=0,1$. 
		
		Note that $(Y_{it}(1),Y_{it}(0)) \independent I(D_{i1})| (X_{i0}, X_{i1})$, and $I(D_{i1})Y_{it}=I(D_{i1})\left[ I(D_{i1})Y_{it}(1) + (1-I(D_{i1}))Y_{it}(0) \right]=I(D_{i1})Y_{it}(1)$ for $t=0,1$. Now we can write 
		\begin{eqnarray}
		\mathbb{E}\left[\frac{I(D_{i1})Y_{it}}{\pi(D_{i1}|\boldsymbol{X_{i0}},\alpha_{0})}\right]&=&\mathbb{E}\left[\frac{I(D_{i1})Y_{it}(1)}{\pi(D_{i1}|\boldsymbol{X_{i0}},\alpha_{0})}\right]\nonumber\\
		&=& \mathbb{E}\left[\mathbb{E}\left\{\frac{I(D_{i1})Y_{it}(1)}{\pi(D_{i1}|\boldsymbol{X_{i0}},\alpha_{0})}\middle| Y_{it}(1), X_{i0}, X_{i1}\right\} \right] \nonumber\\
		&=&\mathbb{E}\left[\frac{Y_{it}(1)}{\pi(D_{i1}|\boldsymbol{X_{i0}},\alpha_{0})}\mathbb{E}\left(I(D_{i1})\middle| Y_{it}(1), X_{i0}, X_{i1}\right) \right]\nonumber\\
		&=&\mathbb{E}\left[\frac{Y_{it}(1)}{\pi(D_{i1}|\boldsymbol{X_{i0}},\alpha_{0})}\mathbb{E}\left(I(D_{i1})\middle|  X_{i0}\right) \right]\nonumber\\
		&=& \mathbb{E}\left[\frac{Y_{it}(1)}{\pi(D_{i1}|\boldsymbol{X_{i0}},\alpha_{0})}\pi(D_{i1}|\boldsymbol{X_{i0}},\alpha_{0})\right]\nonumber\\
		&=&\mathbb{E}\left[Y_{it}(1)\right], \nonumber
		\end{eqnarray}
		where $\alpha_{0}$ is the ture value of $\alpha$.
		Similarly, we have $\mathbb{E}\left[\frac{(1-I(D_{i1}))Y_{it}}{1-\pi(D_{i1}|\boldsymbol{X_{i0}},\alpha_{0})}\right]=\mathbb{E}\left[Y_{it}(0)\right]$ for $t=0,1$. Therefore, 
		\begin{eqnarray}
		\mathbb{E}\left[\dfrac{D_{i1}-\pi(D_{i1}|\boldsymbol{X_{i0}},\alpha_{0})}{\pi(D_{i1}|\boldsymbol{X_{i0}},\alpha_{0)}\left[1-\pi(D_{i1}|\boldsymbol{X_{i0}},\alpha_{0}\right)]}\frac{\partial }{\partial \hat{\alpha}}\pi(D_{i1}|\boldsymbol{X_{i0}},\alpha_{0})\right]=0 \nonumber,
		\end{eqnarray}
		\begin{eqnarray}
		\mathbb{E}\left[\dfrac{I(D_{i1})Y_{it}}{\pi(D_{i1}|\boldsymbol{X_{i0}},\alpha_{0})}- \Delta_{t}^{(1)}\right]=0 \nonumber,
		\end{eqnarray}
		and
		\begin{eqnarray}
		\mathbb{E}\left[\dfrac{\left[1-I(D_{i1})\right]Y_{it}}{1-\pi(D_{i1}|\boldsymbol{X_{i0}},\alpha_{0})}- \Delta_{t}^{(0)}\right]=0 \nonumber,
		\end{eqnarray}
		where $\Delta_{t}^{(0)}=\mathbb{E}\left[Y_{it}(0)\right]$ and $\Delta_{t}^{(1)}=\mathbb{E}\left[Y_{it}(1)\right]$ for $t=0,1$. Note that $\mathbb{E}[Y_{i0}(1)]=\mathbb{E}[Y_{i0}(0)]$ by assumption. Now using consistency of the M-estimates \citep[p-249]{Huber,Serf}, we have
		
		\begin{eqnarray}
		\hat{\tau}_{IPWDID}&=&\left[\hat{\Delta}_{1}^{(1)}-\hat{\Delta}_{1}^{(0)}\right]-\left[\hat{\Delta}_{0}^{(1)}-\hat{\Delta}_{0}^{(0)}\right]\nonumber\\
		&\overset{a.s.}{\longrightarrow}&\Big[\mathbb{E}\left[Y_{i1}(1)\right]-\mathbb{E}\left[Y_{i1}(0)\right]\Big]-\Big[\mathbb{E}\left[Y_{i0}(1)\right]-\mathbb{E}\left[Y_{i0}(0)\right]\Big]\nonumber\\
		&=&\mathbb{E}\left[Y_{i1}(1)\right]-\mathbb{E}\left[Y_{i1}(0)\right]=\tau_{0}.\nonumber
		\end{eqnarray}
		
	\end{proof}

	\begin{theorem}
		The inverse propensity weighted difference-in-difference estimator $\hat{\kappa}_{IPWDID}$ is a consistent estimator of $\kappa_{0}$ when the propensity score model $\pi(D_{i1}|\boldsymbol{X_{i0}}, \hat{\alpha})$ is correctly specified, and $\mathbb{E}[Y_{i0}(1)|D_{i1}=1]=\mathbb{E}[Y_{i0}(0)|D_{i1}=1]$.
	\end{theorem}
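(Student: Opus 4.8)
The plan is to mirror the proof of Theorem~1, replacing the inverse‑probability weights by the odds‑type weights that reweight the control group to the covariate law of the treated, in the spirit of the importance‑sampling estimator of \citet{Erica}. First I would set up the M‑estimator bookkeeping. Alongside the likelihood score equation (\ref{ML}) for $\hat{\alpha}$, introduce, for $t=0,1$,
\begin{align*}
\hat{\Delta}_{t}^{(1)} &=\frac{1}{\sum_{i=1}^{n}D_{i1}}\sum_{i=1}^{n}I(D_{i1})Y_{it},\\
\hat{\Delta}_{t}^{(0)} &=\frac{1}{\sum_{i=1}^{n}D_{i1}}\sum_{i=1}^{n}\frac{\left[1-I(D_{i1})\right]\pi(D_{i1}|\boldsymbol{X_{i0}},\hat{\alpha})}{1-\pi(D_{i1}|\boldsymbol{X_{i0}},\hat{\alpha})}Y_{it},
\end{align*}
so that $\hat{\kappa}_{IPWDID}=\left[\hat{\Delta}_{1}^{(1)}-\hat{\Delta}_{1}^{(0)}\right]-\left[\hat{\Delta}_{0}^{(1)}-\hat{\Delta}_{0}^{(0)}\right]$. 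Each $\hat{\Delta}_{t}^{(j)}$ is a ratio of sample averages, hence a smooth function of M‑estimators once the sample proportion $n^{-1}\sum_{i}D_{i1}$ of treated units is appended to the estimating system.

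The core step is the population identification of these limits. As in the proof of Theorem~1, logical consistency gives $I(D_{i1})Y_{it}=I(D_{i1})Y_{it}(1)$ and $\left[1-I(D_{i1})\right]Y_{it}=\left[1-I(D_{i1})\right]Y_{it}(0)$, so $\mathbb{E}[I(D_{i1})Y_{it}]=P[D_{i1}=1]\,\mathbb{E}[Y_{it}(1)\mid D_{i1}=1]$, and since $n^{-1}\sum_{i}D_{i1}\to P[D_{i1}=1]$ the law of large numbers yields $\hat{\Delta}_{t}^{(1)}\overset{a.s.}{\longrightarrow}\mathbb{E}[Y_{it}(1)\mid D_{i1}=1]$. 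For the control term I would condition on $(Y_{it}(0),\boldsymbol{X_{i0}},\boldsymbol{X_{i1}})$, use that $D_{i1}$ depends on $\boldsymbol{X_{i0}}$ alone so that $\mathbb{E}[I(D_{i1})\mid \boldsymbol{X_{i0}},\boldsymbol{X_{i1}}]=\pi(D_{i1}|\boldsymbol{X_{i0}},\alpha_{0})$, and invoke unconfoundedness, to collapse
\begin{align*}
\mathbb{E}\!\left[\frac{\left[1-I(D_{i1})\right]\pi(D_{i1}|\boldsymbol{X_{i0}},\alpha_{0})}{1-\pi(D_{i1}|\boldsymbol{X_{i0}},\alpha_{0})}Y_{it}(0)\right]
&=\mathbb{E}\!\left[\pi(D_{i1}|\boldsymbol{X_{i0}},\alpha_{0})Y_{it}(0)\right]\\
&=\mathbb{E}[I(D_{i1})Y_{it}(0)]=P[D_{i1}=1]\,\mathbb{E}[Y_{it}(0)\mid D_{i1}=1],
\end{align*}
whence $\hat{\Delta}_{t}^{(0)}\overset{a.s.}{\longrightarrow}\mathbb{E}[Y_{it}(0)\mid D_{i1}=1]$. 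This odds‑reweighting identity---that weighting controls by $\pi/(1-\pi)$ reproduces expectations over the treated covariate distribution---is the one genuinely new ingredient relative to Theorem~1.

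Finally I would assemble the pieces by M‑estimator consistency \citep[p-249]{Huber,Serf}, stacking (\ref{ML}) with the defining equations of the $\hat{\Delta}_{t}^{(j)}$ and the treated‑proportion equation, to obtain
\begin{align*}
\hat{\kappa}_{IPWDID} \overset{a.s.}{\longrightarrow} {}&\big(\mathbb{E}[Y_{i1}(1)\mid D_{i1}=1]-\mathbb{E}[Y_{i1}(0)\mid D_{i1}=1]\big)\\
&{}-\big(\mathbb{E}[Y_{i0}(1)\mid D_{i1}=1]-\mathbb{E}[Y_{i0}(0)\mid D_{i1}=1]\big),
\end{align*}
after which the hypothesis $\mathbb{E}[Y_{i0}(1)\mid D_{i1}=1]=\mathbb{E}[Y_{i0}(0)\mid D_{i1}=1]$ annihilates the second bracket, leaving $\kappa_{0}$. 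The main obstacle I anticipate is the conditioning argument behind the control‑term identity: one must check that the propensity is $\sigma(\boldsymbol{X_{i0}})$‑measurable so that it factors out of the inner expectation, and that unconfoundedness is applied with the conditioning set $(\boldsymbol{X_{i0}},\boldsymbol{X_{i1}})$ rather than a coarser one; the remainder is a routine repackaging of the Theorem~1 argument with the baseline‑equality assumption now stated conditionally on $D_{i1}=1$.
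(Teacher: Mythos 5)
Your proof is correct and takes essentially the same route as the paper: decompose $\hat{\kappa}_{IPWDID}$ into post- and pre-intervention contrasts, show each converges to the corresponding conditional mean difference among the treated, and cancel the baseline contrast using the assumption $\mathbb{E}[Y_{i0}(1)|D_{i1}=1]=\mathbb{E}[Y_{i0}(0)|D_{i1}=1]$. The only difference is one of detail: the paper simply cites \citet{Erica} for the two convergence statements, whereas you prove the key odds-reweighting identity (that weighting controls by $\pi/(1-\pi)$ recovers expectations under the treated covariate law) explicitly via the conditioning argument of Theorem~1, making your version more self-contained but not a different argument.
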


	\begin{proof}
		Following \citet{Erica}, we have
		\begin{eqnarray}
		\frac{1}{\sum_{i=1}^{n}D_{i1}}\sum_{i=1}^{n}\left[I(D_{i1})-\frac{\left[1-I(D_{i1})\right]\pi(D_{i1}|\boldsymbol{X_{i0}},\hat{\alpha})}{1-\pi(D_{i1}|\boldsymbol{X_{i0}}, \hat{\alpha})}\right]Y_{i1}\overset{p}{\longrightarrow} \mathbb{E}\left[Y_{i1}(1)|D_{i1}=1\right]-\mathbb{E}\left[Y_{i1}(0)|D_{i1}=1\right]\nonumber.
		\end{eqnarray}
		Similarly, we also have
		\begin{eqnarray}
		\frac{1}{\sum_{i=1}^{n}D_{i1}}\sum_{i=1}^{n}\left[I(D_{i1})-\frac{\left[1-I(D_{i1})\right]\pi(D_{i1}|\boldsymbol{X_{i0}},\hat{\alpha})}{1-\pi(D_{i1}|\boldsymbol{X_{i0}}, \hat{\alpha})}\right]Y_{i0}\overset{p}{\longrightarrow} \mathbb{E}\left[Y_{i0}(1)|D_{i1}=1\right]-\mathbb{E}\left[Y_{i0}(0)|D_{i1}=1\right]\nonumber.
		\end{eqnarray}
		Note that $\mathbb{E}[Y_{i0}(1)|D_{i1}=1]=\mathbb{E}[Y_{i0}(0)|D_{i1}=1]$ by assumption. Hence, ${\hat{\kappa}_{IPWDID}\overset{p}{\longrightarrow} \kappa_{0}}$.
	\end{proof}

\end{appendix}
	 	
	\bibliographystyle{apalike}
	\bibliography{Bibliography}

\end{document}